\numberwithin{equation}{subsection}
\DeclarePairedDelimiter\abs{\lvert}{\rvert}
\newcommand\lqdist{\mathop{\mbox{$\ell_q$-$\mathit{dist}$}}}
\newcommand{\ldist}[1]{\mathop{\mbox{$\ell_{#1}$-$\mathit{dist}$}}}
\newcommand{\norm}[1]{\left\|#1\right\|}
\newcommand{\lexpans}[1]{\mathop{\mbox{$\ell_{#1}$-$\mathit{expans}$}}}
\newcommand{\expans}{\mathop{\mbox{$expans$}}}
\newcommand{\distort}{\mathop{\mbox{$dist$}}}
\newcommand\senergy{\mathop{\mbox{$REM$}}}
\newcommand\rem{\mathop{\mbox{$REM$}}}
\newcommand\energy{\mathop{\mbox{$Energy$}}}
\newcommand{\E}{{\rm E}}
\newcommand\inp[2]{\langle #1, #2 \rangle}
\newcommand{\eps}{{\epsilon}}
\newcommand{\alert}[1]{\textbf{\color{red}
[#1]}\marginpar{\textbf{\color{red}**}}\typeout{ALERT:
\the\inputlineno: #1}}
\newtheorem{theorem}{Theorem}
\newtheorem{claim}[theorem]{Claim}
\numberwithin{theorem}{section}
\newtheorem{definition}{Definition}
\numberwithin{definition}{section}
\newtheorem{lemma}[theorem]{Lemma}
\numberwithin{fact}{section}
\newtheorem{corollary}{Corollary}
\numberwithin{corollary}{section}
\title{Optimality of the Johnson-Lindenstrauss Dimensionality Reduction for Practical Measures}
\begin{document}

\author{
Yair Bartal
\thanks{Hebrew University.  Supported in part by a grant from the Israeli Science Foundation (1817/17). \href{mailto:yair@cs.huji.ac.il}{\texttt{yair@cs.huji.ac.il}}} \\
\and
Ora Nova Fandina 
\thanks{Aarhus University. \href{mailto:fandina@gmail.com}{\texttt{fandina@cs.au.dk}}} \\ 
 \\
\and
Kasper Green Larsen 
\thanks{Aarhus University. \href{mailto:larsen@cs.au.dk }{\texttt{larsen@cs.au.dk }}} 
}
\date{}

\begin{titlepage}
\def\thepage{}
\thispagestyle{empty}
\maketitle
\begin{abstract}
It is well known that the Johnson-Lindenstrauss dimensionality reduction method is optimal for worst case distortion. While in practice many other methods and heuristics are used, not much is known in terms of bounds on their performance. The question of whether the JL method is optimal for practical measures of distortion was recently raised in \cite{BFN19} (NeurIPS'19). They provided upper bounds on its quality for a wide range of practical measures and showed that indeed these are best possible in many cases. Yet, some of the most important cases, including the fundamental case of average distortion were left open. In particular, they show that the JL transform has $1+\epsilon$ average distortion for embedding into $k$-dimensional Euclidean space, where $k=O(1/\eps^2)$, and for more general $q$-norms of distortion, $k = O(\max\{1/\eps^2,q/\eps\})$, whereas tight lower bounds were established only for large values of $q$ via reduction to the worst case.

In this paper we prove that these bounds are best possible for any dimensionality reduction method, for any $1 \leq q \leq O(\frac{\log (2\eps^2 n)}{\eps})$ and $\epsilon \geq \frac{1}{\sqrt{n}}$, where $n$ is the size of the subset of Euclidean space.

Our results also imply that the JL method is optimal for various distortion measures commonly used in practice, such as {\it stress, energy} and {\it relative error}. We prove that if any of these measures is bounded by $\eps$ then $k=\Omega(1/\eps^2)$, for any $\epsilon \geq \frac{1}{\sqrt{n}}$, matching the upper bounds of \cite{BFN19} and extending their tightness results for the full range moment analysis.

Our results may indicate that the JL dimensionality reduction method should be considered more often in practical applications, and the bounds we provide for its quality should be served as a measure for comparison when evaluating the performance of other methods and heuristics.
\end{abstract}
\end{titlepage}

\section{Introduction}

Dimensionality reduction is a key tool in numerous fields of data analysis, commonly used as a compression scheme to enable reliable and efficient computation. In metric dimensionality reduction subsets of high-dimensional spaces are embedded into a low-dimensional space, attempting to preserve metric structure of the input. There is a large body of theoretical and applied research on such methods spanning a wide range of application areas such as online algorithms, computer vision, network design, machine learning, to name a few.

Metric embedding has been extensively studied by mathematicians and computer scientists over the past few decades (see \cite{Indyk01,linial,IndykMat04} for surveys). In addition to the beautiful theory, a plethora of original and elegant techniques have been developed and successfully applied in various fields of algorithmic research, e.g., clustering, nearest-neighbor, distance oracle. See \cite{Mat02, Indyk01, Vempala} for exposition of some applications.

The vast majority of these methods have been designed to optimize the worst-case distance error incurred by embedding.
For metric spaces $(X, d_X)$ and $(Y,d_Y)$ an injective map $f:X \to Y$ is an embedding. It has (a worst-case) distortion $\alpha \geq 1$ if there is a positive constant $c$ satisfying for all $u\neq v \in X$, $ d_Y(f(u), f(v))\leq c\cdot d_X(u,v) \leq \alpha \cdot d_Y(f(u), f(v))$. A cornerstone result in metric dimensionality reduction is the celebrated Johnson-Lindenstrauss lemma \cite{JL}. It states that any $n$-point subset of Euclidean space can be embedded, via a linear transform, into a $O(\log n/\epsilon^2)$-dimensional subspace with $1+\epsilon$ distortion. It has been recently shown to be optimal in $\cite{LN17}$ and in $\cite{AK17}$ (improving upon \cite{Alon09}). Furthermore, it was shown in \cite{Maush90} that there are Euclidean pointsets in $\mathbb{R}^d$ for which any embedding into $k$-dimensions must have $n^{\Omega(1/k)}$ distortion, effectively ruling out dimensionality reduction into a constant number of dimensions with a constant worst-case distortion.

Metric embedding and in particular dimensionality reduction have also gained significant attention in applied community.
Practitioners have frequently employed classic tools of metric embedding as well as have designed new techniques to cope with high-dimensional data. A large number of dimensionality reduction heuristics have been developed for a variety of practical settings, eg. \cite{tSNE, Umap, TriMap, PacMap}. However, most of these heuristics have not been rigorously analyzed in terms of the incurred error. Recent papers \cite{VL18} and \cite{BFN19} initiate the formal study of practically oriented analysis of metric embedding.

\paragraph*{Practical distortion measures}
In contrast to the worst case distortion the quality of practically motivated embedding is often determined by its average performance over all pairs, where an error per pair is measured as an additive error, a multiplicative error or a combination of both. There is a huge body of applied research investigating such notions of quality. For the list of citations and a more detailed account on the theoretical and practical importance of average distortion measures see \cite{BFN19}. 

In this paper we consider the most basic and commonly used in practical applications notions of average distortion, which we define in the following. 
Moment of distortion was defined in \cite{ABN06}, and studied in various papers since then.
\begin{definition}[$\ell_q$-distortion]\label{def:av_dist}

For $u \neq v \in X$ let $expans_f(u,v)=d_Y(f(u),f(v))/d_X(u,v)$ and \linebreak $contract_f(u,v)=d_X(u,v)/d_Y(f(u),f(v))$. 
Let $dist_f(u,v)=\max\{expans_f(u,v),contract_f(u,v)\}$. For any $q\geq 1$ the $q$-th moment of distortion is defined by 

\[\ldist{q}(f)=\left( \frac{1}{{  \binom{\abs{X}}{2}}}\sum_{u \neq v \in X}(dist_f(u,v))^q\right)^{1/q}.\] 

\end{definition}
Additive average distortion measures are often used when a nearly isometric embedding is expected. Such notions as {\it energy}, {\it stress} and {\it relative error measure} (REM) are common in various statistic related applications. For a map $f:X \to Y$, for a pair $u\neq v \in X$ let $d_{u,v}:=d_X(u,v)$ and let $\hat{d}_{uv}:=d_Y(f(u),f(v))$. For all $q\geq 1$

\begin{definition}[Additive measures] 
\[
{Energy}_{q}(f) =
\left(\frac{1}{{\binom{\abs{X}}{2}}}\sum_{u \neq v \in X} \left(\frac{\abs{\hat{d}_{uv}-d_{uv}}}{d_{u,v}} \right)^q\right)^{\frac{1}{q}} 
 =\left(\frac{1}{{\binom{\abs{X}}{2}}}\sum_{u \neq v \in X}\abs[\big]{expans_f(u,v)-1}^q\right)^{\frac{1}{q}} .\]

\[{Stress}_{q}(f) = \left(\frac{ \sum_{u \neq v \in X} |\hat d_{uv}-d_{uv}|^q }{ \sum_{u \neq v \in X}(d_{uv})^q}\right)^{\frac{1}{q}},\;\;\; {Stress^*}_{q}(f)=\left( \frac{\sum_{u \neq v \in X} |\hat d_{uv}-d_{uv}|^q  }{ \sum_{u\neq v \in X}(\hat d_{uv})^q}\right)^{\frac{1}{q}}.\]

\[{\senergy}_{q}(f)  =   {\left( \frac{1}{{\binom{\abs{X}}{2}}}\sum_{u \neq v \in X}\left( \frac{|\hat d_{uv} -d_{uv}|}{\min\{\hat d_{uv}, d_{uv}\}} \right)^q \right)}^{\frac{1}{q}}.
\]
\end{definition}

It was proved in \cite{BFN19} that 
\begin{claim}\label{BFN-relations}
 For all $q \geq 1$, 
$\lqdist(f) - 1 \geq {\rem}_q(f) \geq {\energy}_q(f)$. 
\end{claim}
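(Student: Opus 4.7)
The plan is to prove both inequalities pointwise on each pair $u\neq v\in X$ and then pass to the averaged $\ell_q$-norm. Write $d:=d_{uv}$ and $\hat d:=\hat d_{uv}$. The first ingredient is to rewrite the REM summand in a form matching $dist_f$: a short case split on the sign of $\hat d-d$ yields
\[
\frac{|\hat d-d|}{\min\{\hat d,d\}}\;=\;dist_f(u,v)-1,
\]
since whichever of $\hat d/d$ and $d/\hat d$ is at least $1$ equals $dist_f(u,v)$, and the matching minimum in the denominator cancels to produce exactly $dist_f(u,v)-1$. Consequently $\rem_q(f)^q=\tfrac{1}{\binom{|X|}{2}}\sum_{u\neq v}(dist_f(u,v)-1)^q$, putting all three quantities into a common $dist_f$-based framework.

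For $\rem_q(f)\geq \energy_q(f)$, I would establish the pointwise bound $dist_f(u,v)-1\geq |expans_f(u,v)-1|$ by casework. If $expans_f(u,v)\geq 1$, then $dist_f(u,v)=expans_f(u,v)$ and both sides coincide. If $expans_f(u,v)<1$, then $dist_f(u,v)-1=(1-expans_f(u,v))/expans_f(u,v)$, which strictly exceeds $1-expans_f(u,v)=|expans_f(u,v)-1|$ because the denominator is less than $1$. Raising to the $q$-th power, summing, and taking $q$-th roots preserves the inequality and gives $\rem_q(f)\geq \energy_q(f)$.

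For $\ldist{q}(f)-1\geq \rem_q(f)$, I would exploit that $dist_f(u,v)\geq 1$ pointwise. The natural tool is super-additivity of $t\mapsto t^q$ on $[0,\infty)$ for $q\geq 1$: writing $dist_f(u,v)=1+(dist_f(u,v)-1)$ and applying $(a+b)^q\geq a^q+b^q$ for $a,b\geq 0$ gives $dist_f(u,v)^q-1\geq (dist_f(u,v)-1)^q$. Averaging over pairs yields $\ldist{q}(f)^q-1\geq \rem_q(f)^q$. The main obstacle is the clean passage from this $q$-th-power estimate to the stated un-powered inequality, which is not an immediate algebraic consequence: one has to either sharpen the pointwise step (leveraging that $dist_f(u,v)\geq 1$ allows a stronger termwise bound than raw super-additivity) or give a direct Minkowski-style argument comparing $\ldist{q}(f)$ with $1+\rem_q(f)$ by regrouping $dist_f$ as the sum of the constant $1$ and the non-negative perturbation $dist_f-1$.
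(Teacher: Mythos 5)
The paper does not actually prove this claim (it is imported from \cite{BFN19} without proof), so your argument has to stand on its own. The parts you complete are correct: the case split giving $\abs{\hat d_{uv}-d_{uv}}/\min\{\hat d_{uv},d_{uv}\}=dist_f(u,v)-1$ is right, and the pointwise bound $dist_f(u,v)-1\geq\abs{expans_f(u,v)-1}$ (equality when $expans_f\geq 1$, strict when $expans_f<1$ since one divides $1-expans_f$ by $expans_f<1$) cleanly yields ${\rem}_q(f)\geq{\energy}_q(f)$ after raising to the $q$-th power and averaging.

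The gap you flag in the first inequality, however, is not a technicality you failed to see through --- it cannot be closed, because the un-powered statement $\ldist{q}(f)-1\geq\rem_q(f)$ is false for $q>1$. Using your identity, $\rem_q(f)$ is the normalized $\ell_q$-average of $dist_f-1$ over pairs and $\ldist{q}(f)$ is that of $1+(dist_f-1)$, so the Minkowski-style argument you propose actually gives the \emph{reverse} bound $\ldist{q}(f)\leq 1+\rem_q(f)$, with equality only when $dist_f$ is constant across pairs or $q=1$. Concretely, if half the pairs have $dist_f=1$ and half have $dist_f=1+2s$ with $s$ small, then $\ldist{q}(f)-1\approx s$ while $\rem_q(f)=2^{1-1/q}\,s>s$ for every $q>1$. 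What your superadditivity step $(1+b)^q\geq 1+b^q$ does establish is the correct powered relation $(\ldist{q}(f))^q-1\geq(\rem_q(f))^q$, and for $q=1$ the first inequality holds with equality, $\ldist{1}(f)-1=\rem_1(f)$, simply by linearity of the average. Since the paper invokes the claim only at $q=1$ (larger $q$ is handled by monotonicity of the measures in $q$), your proof is completable for every use the paper makes of it; but as stated "for all $q\geq 1$" the first inequality should be read in its powered form, and no sharpening of the pointwise step will rescue the literal statement.
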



Finally, \cite{CV18} defined $\sigma$-distortion and showed it to be particularly useful in machine learning applications. For $r\geq 1$, let $\lexpans{r}(f)=({\binom{n}{2}}^{-1}\sum_{u \neq v } ({\expans}_{f}(u,v))^r)^{1/r}$

\begin{definition}[$\sigma$-distortion]
\[\sigma_{q,r}(f)= \left( \frac{1}{ \binom{\abs{X}}{2}}\left|\frac{{\expans}_f(u,v)}{\lexpans{r}(f)} -1\right|^q\right)^{1/q}.\]
\end{definition}

In \cite{BFN19} the authors rigorously analyzed dimensionality reduction for the above distortion measures. 
The central question they studied is: {\;\it What dimensionality reduction method is optimal for these quality measures and what are the optimal bounds achievable ? In particular, is the Johnson-Lindenstrauss (JL) transform also optimal for the average quality criteria?}

Their analysis of the Gaussian implementation of the JL embedding \cite{IM98} shows that any Euclidean subset can be embedded with $1+\epsilon$ {\it average} distortion ($\ldist{1}$) into $k=O(1/\epsilon^2)$ dimensions. And for more general case of the $q$-moment of distortion, the dimension is $k=O(\max\{1/\epsilon^2,q/\epsilon\})$. However, tight lower bounds were proved only for large values of $q$, 
leaving the question of optimality of the most important case of small $q$, and particularly the most basic  case of $q=1$, unresolved. 

For the additive average measures (stress, energy and others) they prove that a bound of $\epsilon$ can be achieved in dimension $k=O(q/\epsilon^2)$. They showed a tight lower bound on the required dimension only for $q\geq 2$, leaving the basic case of $q=1$ also unresolved. 

In this paper, we prove that indeed the Johnson-Lindenstrauss bounds are best possible for any dimensionality reduction for the full range of $q\geq 1$, for all the average distortion measures defined in this paper. We believe that besides theoretical contribution this statement may have important implications for practical considerations. In particular, it may affect the way the JL method is viewed and used in practice, and the bounds we give may serve a basis for comparison for other methods and heuristics.



\paragraph*{Our results} We show that the guarantees given by the Gaussian random projection dimensionality reduction are the best possible for the average distortion measures. In particular, we prove 

\begin{theorem}\label{thm:average}
Given any integer $n$ and $\Omega(\frac{1}{\sqrt{n}}) < \epsilon < 1$, there exists a $\Theta(n)$-point subset of Euclidean space such that any embedding $f$ of it into $\ell_2^k$ with $\ldist{1}(f)\leq 1+\epsilon$ requires $k=\Omega(1/\epsilon^2)$. 
\end{theorem}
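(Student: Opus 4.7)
The plan is to take as the hard instance the orthonormal simplex $X = \{e_1, \ldots, e_N\} \subset \mathbb{R}^N$ with $N = \Theta(n)$, so that all original pairwise distances equal $\sqrt{2}$. Given any candidate embedding $f : X \to \ell_2^k$ with $\ldist{1}(f) \leq 1+\epsilon$ and $v_i = f(e_i)$, I would first normalize by translating so that $\sum_i v_i = 0$ and rescaling $f$ by a positive constant so that the average squared pairwise distance on the image equals $2$. These affine operations change $\ldist{1}$ only by a constant factor, and ensure $\mathbb{E}[\expans_{ij}^2] = 1$, where $\expans_{ij} = \|v_i-v_j\|/\sqrt{2}$ and the expectation is over a uniformly random pair $(i,j)$ with $i<j$.

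Next I would derive a rank-based lower bound on the $\ell_2$-type deviation of the pairwise squared distances. The Gram matrix $G = V^\top V$ of the columns $v_1,\ldots,v_N$ has rank at most $k$ and trace $\Theta(N)$, so by Cauchy--Schwarz on its at most $k$ nonzero eigenvalues one gets $\operatorname{tr}(G^2) \geq \Omega(N^2/k)$. Expanding $\|v_i - v_j\|^2 = \|v_i\|^2 + \|v_j\|^2 - 2\langle v_i, v_j\rangle$ and using the centering and normalization yields the identity $\sum_{i<j}(\|v_i - v_j\|^2 - 2)^2 = 2\operatorname{tr}(G^2) + O(N)$, and hence
\[
\sum_{i<j}\bigl(\|v_i - v_j\|^2 - 2\bigr)^2 \;\geq\; \Omega(N^2/k).
\]

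On the complementary side, the $\ldist{1}$ assumption together with the normalization $\mathbb{E}[\expans_{ij}^2]=1$ and $\mathbb{E}[|\expans_{ij}-1|] \leq \epsilon$ gives $\operatorname{Var}(\expans_{ij}) \leq O(\epsilon)$ and hence $\sum_{i<j}(\expans_{ij}-1)^2 \leq O(\epsilon)\binom{N}{2}$. Converting this into an upper bound on $\sum(\|v_i-v_j\|^2 - 2)^2 = 4\sum(\expans_{ij}-1)^2(\expans_{ij}+1)^2$ requires controlling the few pairs with very large $\expans_{ij}$. I would do this by a peeling step: since $\sum(dist_f - 1) \leq \epsilon\binom{N}{2}$, at most $O(\epsilon N^2/T)$ pairs have distortion exceeding $1+T$, and restricting to a large induced sub-instance that avoids these outliers gives a sub-Gram matrix on which both the rank lower bound and the moment upper bound with $(\expans+1)^2 = O(T^2)$ hold simultaneously.

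The main obstacle is the final balancing step: a direct combination of the $\Omega(N^2/k)$ rank lower bound with the $O(\epsilon N^2)$ $\ell_1$-derived upper bound gives only $k = \Omega(1/\epsilon)$, short of the target by a factor of $1/\sqrt{\epsilon}$. Closing this gap requires carefully choosing $N = \Theta(1/\epsilon^2)$ together with the peeling threshold $T$, and analyzing the rank-induced eigenvalue spread of the Gram submatrix on the surviving vertices. Optimizing these parameters is the technical heart of the argument and is where I expect the main difficulty to lie, but once it is executed it yields the claimed $k = \Omega(1/\epsilon^2)$, matching the Johnson--Lindenstrauss upper bound from \cite{BFN19}.
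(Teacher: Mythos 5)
There is a genuine gap, and it sits exactly where you flag it: the final ``balancing step'' is not a technical detail to be optimized away, but the place where this approach appears to break. The rank/trace argument controls only the \emph{second} moment of the inner-product errors: with $\operatorname{tr}(G)=\Theta(N)$ and $\operatorname{rank}(G)\le k$ one gets $\sum_{i<j}\langle v_i,v_j\rangle^2\ge\Omega(N^2/k)$, so to conclude $k=\Omega(1/\epsilon^2)$ you would need the matching upper bound $\sum_{i<j}\langle v_i,v_j\rangle^2\le O(\epsilon^2 N^2)$. But the hypothesis $\ldist{1}(f)\le 1+\epsilon$ is an $L^1$ bound on the pairwise errors, and an $L^1$ bound of $\epsilon$ is compatible with an $L^2$ bound as large as $\Theta(\epsilon)$ per pair on average (e.g.\ an $\epsilon$-fraction of pairs with constant inner-product error and the rest exact), which the rank bound tolerates with $k=\Theta(1/\epsilon)$. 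Your peeling step does not repair this: to excise all high-distortion pairs from an induced sub-instance of $\Omega(N)$ points you need the threshold $T\gtrsim\epsilon N$, at which point the factor $(\expans+1)^2=O(T^2)$ destroys the bound; and keeping the bad pairs leaves their uncontrolled contribution inside the very sum the rank bound lower-bounds. The auxiliary normalization is also unsound: $\mathbb{E}[\expans^2]$ is not controlled by $\mathbb{E}[|\expans-1|]$, so the rescaling that enforces $\mathbb{E}[\expans^2]=1$ need not be by a factor $1+O(\epsilon)$, and after it the bound $\mathbb{E}[|\expans-1|]\le O(\epsilon)$ --- on which your variance computation relies --- may fail. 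In short, a second-moment spectral argument on the bare simplex provably yields only $k=\Omega(1/\epsilon)$ from an $\ell_1$ hypothesis, which is precisely the pre-existing bound of \cite{BFN19} that the theorem is meant to improve.

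The paper takes an entirely different route: an information-theoretic counting argument in the style of \cite{LN17,AK17}. The hard instance is not the simplex alone but the simplex together with near-zero vectors and a combinatorial family of vectors $y_S=l^{-1/2}\sum_{j\in S}e_j$, giving $\binom{d}{l}^d$ distinct point sets. One shows (Lemma~\ref{lemma:average} and Corollaries~\ref{corr:combine}, \ref{corr:main}) that an embedding with $\energy_1(f)\le\epsilon$ maps a $(1-\alpha)$-fraction of the points into a constant-radius ball and preserves inner products to additive $\Theta(\epsilon)$ \emph{on average}, hence --- after randomized rounding to a grid and a Markov argument --- exactly on a structured constant fraction of the pairs $Y\times E$; this suffices to encode each instance in roughly $\frac78 d^2+2dL_G$ bits, and comparing with $\log|\mathcal{P}|$ forces $k=\Omega(1/\epsilon^2)$. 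The crucial feature your approach lacks is a mechanism by which an \emph{average} (first-moment) distortion guarantee is converted, via Markov plus the combinatorial redundancy of the instance, into enough \emph{pointwise} inner-product information to pin down exponentially many configurations. If you want to pursue a spectral route you would need at minimum a different instance and a way to extract more than second-moment information from the $\ell_1$ hypothesis.
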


For the more general case of large values of $q$, we show

\begin{theorem}\label{thm:lqdist}
Given any integer $n$, and $\Omega(\frac{1}{\sqrt{n}}) < \epsilon < 1$, and $1 \leq q <= O(\log(\epsilon^2 n)/\epsilon)$, there exists a $\Theta(n)$-point subset of Euclidean space such that any embedding of it into $\ell_2^k$ with $\ell_q$-distortion at most $1+\epsilon$ requires $k=\Omega(q/\epsilon)$. 
\end{theorem}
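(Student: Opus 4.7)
My plan is to combine Theorem~\ref{thm:average} with the tight worst-case Johnson--Lindenstrauss lower bound of Larsen and Nelson~\cite{LN17}, which states that any $m$-point subset of Euclidean space embedding into $\ell_2^k$ with worst-case distortion at most $1+\eta$ (for $\eta \geq 1/\sqrt{m}$) forces $k = \Omega(\eta^{-2}\log(\eta^2 m))$. I would take $X$ to be a $\Theta(n)$-point Euclidean set that realizes the LN17 bound (for instance, the hard instance from~\cite{LN17} itself), so that the same $X$ can serve both as a Theorem~\ref{thm:average} hard instance and as an LN17 hard instance.

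\medskip
\noindent Split the argument according to the size of $q$. If $q = O(1/\epsilon)$, so that $q/\epsilon = O(1/\epsilon^2)$, then the monotonicity of the $L^p$-means in $p$ gives $\ldist{1}(f) \leq \ldist{q}(f) \leq 1+\epsilon$, and Theorem~\ref{thm:average} yields $k = \Omega(1/\epsilon^2) = \Omega(q/\epsilon)$. For the more interesting range $q = \Omega(1/\epsilon)$, the plan is to reduce the $\ell_q$-distortion hypothesis to a worst-case distortion hypothesis on a large subset, and then apply LN17. Concretely, Markov's inequality applied to the $q$-th moment $\sum_{u\neq v} dist_f(u,v)^q \leq \binom{|X|}{2}(1+\epsilon)^q$ shows that the fraction of pairs $\{u,v\}\subseteq X$ with $dist_f(u,v) > 1+\lambda\epsilon$ is at most $\left(\frac{1+\epsilon}{1+\lambda\epsilon}\right)^q \leq e^{-c_\lambda q\epsilon}$ for a constant $c_\lambda > 0$ depending only on $\lambda > 1$. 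A deletion argument on the resulting ``bad-pair'' graph then produces a subset $X' \subseteq X$ of size $m = \Theta(e^{c_\lambda q\epsilon/2})$ on which $f$ has worst-case distortion at most $1+\lambda\epsilon$. The hypothesis $q \leq O(\log(\epsilon^2 n)/\epsilon)$ is exactly what guarantees $m \leq n$ (so $X'\subseteq X$) and $\epsilon \geq 1/\sqrt{m}$ (so LN17 is applicable). Invoking LN17 on $X'$ then gives $k \geq \Omega(\epsilon^{-2}\log(\epsilon^2 m)) = \Omega(q/\epsilon)$.

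\medskip
\noindent The main obstacle is the tuning of $\lambda$ in the Markov step: large $\lambda$ permits a bigger surviving subset $X'$ (stronger decay $e^{-c_\lambda q\epsilon}$) but weakens the LN17 bound by a factor of $\lambda^2$ (since it applies with distortion $1+\lambda\epsilon$), while small $\lambda$ has the opposite effect. Balancing these so that the combined bound yields $\Omega(q/\epsilon)$ throughout the stated range of $q$---in particular at the junction $q \sim 1/\epsilon$ between the two cases, and in the ``shoulder'' range where $\log(\epsilon^2 m)$ contains a correction $-O(\log(1/\epsilon))$ that could absorb the leading $q\epsilon$ term---is the key technical point and requires careful choice of the absolute constants hidden in the theorem's $O(\log(\epsilon^2 n)/\epsilon)$ upper bound on $q$.
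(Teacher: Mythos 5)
Your proposal has a genuine gap, in fact two. First, the reduction misuses the Larsen--Nelson bound. LN17 is an \emph{existential} statement over point sets: there exists an $m$-point set $P$ such that every low-distortion embedding of $P$ needs many dimensions. It says nothing about an arbitrary $m$-point set. In your argument the surviving subset $X'$ is produced by a deletion argument applied to the bad-pair graph of the \emph{adversarially chosen} embedding $f$; you have no control over which points survive. Even if $X$ is the LN17 hard instance (near-zero vectors, basis vectors $e_i$, and combination vectors $y_S$), the adversary can concentrate its $e^{-c q\epsilon}$ budget of badly-distorted pairs exactly on the information-carrying pairs $(y_S,e_i)$, so that $X'$ ends up being a subset that embeds nearly isometrically into very few dimensions. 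The conclusion ``$X'$ has worst-case distortion $1+\lambda\epsilon$, hence $k=\Omega(\epsilon^{-2}\log(\epsilon^2 m))$'' simply does not follow. Second, even granting applicability, the quantitative bound collapses in the intermediate range. With $m=e^{\Theta(c_\lambda q\epsilon)}$ one gets $\log(\epsilon^2 m)=\Theta(c_\lambda q\epsilon)-2\log(1/\epsilon)$, which is nonpositive whenever $q\lesssim \log(1/\epsilon)/(c_\lambda\epsilon)$; and pushing $c_\lambda$ up forces $\lambda$ up, costing a factor $\lambda^2$ in the LN17 bound, so the two effects never balance to give $\Omega(q/\epsilon)$ when $1/\epsilon \ll q \ll \log(1/\epsilon)/\epsilon$. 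This is precisely the range the theorem is meant to close: the paper states explicitly (in the related-work discussion) that the black-box reduction to the worst-case bound was already known to work for $q=\Omega(\log(1/\epsilon)/\epsilon)$, and the whole point of Theorem~\ref{thm:lqdist} is the remaining range. Your closing remark that the shoulder can be handled by ``careful choice of the absolute constants'' is exactly where the proof would fail.

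The paper's actual argument does not pass to a subset and does not invoke LN17 as a black box. It keeps the entire instance, enlarges it to dimension $d=l\tau$ with $\tau=e^{\epsilon q}$ and $l=\Theta(1/\epsilon^2)$, and reruns the encoding/counting scheme: Markov on the $q$-th moment (as in your proposal) shows that all but a $O(\tau^{-4})$ fraction of pairs have $O(\epsilon)$-preserved inner products after a randomized grid rounding (with a Hoeffding-based tail bound replacing the expectation bound), the few bad pairs $(y_S,e_i)$ are encoded \emph{explicitly} at one bit each, and the counting argument still closes because $\log|\mathcal{P}|=ld\log\tau=\Theta(d\cdot q/\epsilon)$ grows with $\tau$ while the explicit-encoding overhead is only $O(d^2\log(\tau)/\tau^2)$. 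Handling the discarded pairs inside the encoding, rather than discarding the points and appealing to an external lower bound, is the essential idea your proposal is missing. (Your treatment of the easy regime $q=O(1/\epsilon)$ via monotonicity and Theorem~\ref{thm:average} is fine and matches the paper.)
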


As $\ell_q$-distortion is monotonically increasing as a function of $q$, the theorems imply the lower bound of $k=\Omega \left(\max\left\{ 1/\epsilon^2, q/\epsilon\right\}\right)$. 

For the additive distortion measures we prove the following theorem:
\begin{theorem}\label{thm:additive}
Given any integer $n$ and $\Omega(\frac{1}{\sqrt{n}}) < \epsilon < 1$, there exists a $\Theta(n)$-point subset of Euclidean space such that any embedding of it into $\ell_2^k$ with any of $Energy_1$, $Stress_1$, $Stress^{*}_1$, $REM_1$ or $\sigma$-distortion bounded above by $\epsilon$ requires $k=\Omega(1/\epsilon^2)$. 
\end{theorem}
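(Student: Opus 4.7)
The plan is to derive Theorem~\ref{thm:additive} by reducing the five additive measures to a single lower bound on $Energy_1$, which in turn will be derived from the $\ell_1$-distortion lower bound of Theorem~\ref{thm:average}. I would use the hard instance $X$ from Theorem~\ref{thm:average}, which (either by construction or by passing to a suitable rescaled variant) we may assume has pairwise distances in a bounded range $[1,C]$.

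For $REM_1$, Claim~\ref{BFN-relations} gives $REM_1(f) \ge Energy_1(f)$ immediately, so the same lower bound transfers. For $Stress_1$, the bounded aspect ratio yields $Stress_1(f) = \sum|\hat d - d|/\sum d \ge Energy_1(f)/C$, since $\sum d \le C\binom{\abs{X}}{2}$ and $d_{uv}\ge 1$ gives $Energy_1(f) \le \mathrm{avg}|\hat d - d|$. For $Stress^*_1$, one argues that if $Stress^*_1(f)$ is much smaller than one then $\mathrm{avg}\,\hat d$ cannot greatly exceed $\mathrm{avg}\,d$ (otherwise pairs with large expansion would themselves make $|\hat d - d|$ comparable to $\hat d$ and dominate the numerator), so $\mathrm{avg}\,\hat d = O(1)$ and $Stress^*_1 = \Omega(Energy_1)$. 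For $\sigma_{1,r}$, the quantity coincides with $Energy_1(f/\lexpans{r}(f))$, a specific rescaling of $f$, so the $Energy_1$ lower bound applies verbatim.

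The core step is therefore the $Energy_1$ lower bound. Given $f$ with $Energy_1(f)\le\epsilon$, my plan is to produce an embedding $\tilde f$ into the same $\mathbb{R}^k$ with $\ldist{1}(\tilde f)\le 1+O(\epsilon)$, so that Theorem~\ref{thm:average} forces $k=\Omega(1/\epsilon^2)$. Rescale $f$ so that $\mathrm{avg}\,\hat d = \mathrm{avg}\,d$; Markov's inequality then implies that all but an $O(\epsilon)$ fraction of pairs are \emph{good} in the sense that $\hat d \in [d/2, 3d/2]$, and for these, $\max(\hat d/d, d/\hat d)-1 \le 2|\hat d-d|/d$, so their contribution to $\ldist{1}(f)-1$ is only $O(\epsilon)$. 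The remaining \emph{bad} pairs are handled by a repair step that modifies the images of the few points involved in many bad pairs so that no pair has $\hat d$ far from $d$, while adding only $O(\epsilon)$ to $Energy_1$.

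The main obstacle will be this repair step: a pointwise modification in $\mathbb{R}^k$ necessarily affects every pair involving the modified point, so care is needed to prevent the modifications from inflating $Energy_1$ on the good pairs. I expect the cleanest implementation to exploit structural properties of the hard instance produced by Theorem~\ref{thm:average}—e.g., tight distance concentration in a Gaussian or simplex-type construction—so that bad pairs are both rare and locally repairable. Alternatively, a direct spectral or rank-based argument on the pairwise distance matrix, tailored to additive deviations rather than multiplicative ones, may bypass the reduction entirely and give the $Energy_1$ bound directly.
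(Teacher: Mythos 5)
Your reductions among the five measures are mostly in the spirit of the paper ($REM_1\ge Energy_1$ from Claim~\ref{BFN-relations}; $\sigma_{1,r}(f)=Energy_1$ of a rescaled map; $Stress^*_1$ via rescaling to $Stress_1$), but the core of your plan --- deriving the $Energy_1$ lower bound from Theorem~\ref{thm:average} by ``repairing'' a low-energy embedding into a low-$\ell_1$-distortion one --- has a genuine gap, and it also runs the reduction in the opposite direction from the paper. In the paper Theorem~\ref{thm:average} is itself a corollary of the $Energy_1$ bound (via $\ldist{1}(f)-1\ge REM_1(f)\ge Energy_1(f)$); the $Energy_1$ bound is the primitive, proved from scratch by the counting/encoding argument of Section~\ref{sec:lower_bound}. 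Your direction is the hard one because $Energy_1$ only sees $|\expans_f(u,v)-1|$, which is bounded by $1$ no matter how severe a contraction is. Concretely, an adversary can take a good embedding and collapse a set $S$ of $s=\sqrt{\epsilon}\,n$ standard basis vectors to (essentially) a single image point: the $\binom{s}{2}\approx\epsilon n^2/2$ internal pairs each add $|\expans-1|= 1$, i.e.\ only $O(\epsilon)$ to $Energy_1$, and since all distances from a basis vector to the rest of the instance agree up to $O(\epsilon)$ additively, the collapsed location can be chosen so the $s\cdot n$ cluster-to-outside pairs are each off by only $O(\epsilon)$. Any repair that touches only the collapsed points must spread them to pairwise separation roughly $\eta$; the internal pairs (true distance $\sqrt{2}$, an $\epsilon$-fraction of all pairs) then contribute $\Omega(\epsilon/\eta)$ to $\ldist{1}-1$, while moves of size $\eta$ perturb the $\sqrt{\epsilon}\,n^2$ cluster-to-outside pairs and contribute up to $\Omega(\sqrt{\epsilon}\,\eta)$. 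Balancing gives $\ldist{1}(\tilde f)-1=\Omega(\epsilon^{3/4})$ at best, so Theorem~\ref{thm:average} would only yield $k=\Omega(\epsilon^{-3/2})$, short of the claimed $\Omega(\epsilon^{-2})$. (A repair that re-embeds more than the bad points is no longer a reduction to Theorem~\ref{thm:average} --- it is essentially the theorem itself.) This is why the paper proves the $Energy_1$ bound directly: Lemma~\ref{lemma:average} extracts a large subset mapped into a constant-radius ball with $O(\epsilon)$ \emph{average additive} error in inner products, Lemma~\ref{lemma:grid} discretizes, and Corollary~\ref{corr:main} plus the encoding of Section~\ref{sec:encoding} turn this partial, average-case inner-product information into a short code for the instance.

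A secondary issue: your step $Stress_1\ge Energy_1/C$ assumes the hard instance has all pairwise distances bounded below by a constant, but the instance necessarily contains the cluster $O$ of $d$ near-zero anchor vectors at mutual distance $O(\epsilon)$ (these are essential to Lemma~\ref{lemma:average}), so its aspect ratio is $\Omega(1/\epsilon)$ and the inequality fails. The paper instead reruns the whole encoding argument for the distance-weighted quantity $\overline{Stress}_1(f)=\binom{|X|}{2}^{-1}\sum_{u\neq v} \|u-v\|_2\,|\expans_f(u,v)-1|$ (Appendix~\ref{app:additive_meas}), using only that the \emph{average} distance is $O(1)$.
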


Our main proof is of the lower bound for $Energy_1$ measure, which we show to imply the bound in Theorem~\ref{thm:average} and for all measures in Theorem~\ref{thm:additive}, with some small modifications for the stress measures. Furthermore, since all additive measures are monotonically increasing with $q$ the bounds hold for all $q \geq 1$. Therefore Theorems~\ref{thm:average} and \ref{thm:lqdist} together provide a tight bound of $\Omega(\max\{1/\epsilon^2,q/\epsilon\})$ for the $\ell_q$-distortion. Additionally combined with the lower bounds of \cite{BFN19} for $q \geq 2$, Theorem~\ref{thm:additive} provides a tight bound of $\Omega(q/\epsilon^2)$ for all additive measures.

\paragraph*{Techniques}
The proofs of the lower bounds in all the theorems are based on counting argument, as in the lower bound for the worst case distortion proven by \cite{LN17}. We extend the framework of $\cite{LN17}$ to the entire range of $q$ moments of distortion, including the average distortion. As in the original proof we show that there exists a large family $\mathcal{P}$ of metric spaces that are quite different from each other so that if one can embed all of these into a Euclidean space with a small average distortion the resulting image spaces are different too. This implies that if the target dimension $k$ is too 
small there is not enough space to accommodate all the different metric spaces from the family.

Let us first describe the framework of \cite{LN17}.\footnote{The description is based on combining the methods of \cite{LN17,AK17}, and can be also viewed as our $q$-moments bound with $q=\Theta(\log(\epsilon^2 n)/\epsilon)$.} The main idea is to construct a large family of $n$-point subspaces ${\rm I} \subseteq \ell_2^{\Theta(n)}$ so that each subspace in the family can be uniquely encoded using a small number of bits, assuming that each ${\rm I}$ can be embedded with $1+\epsilon$ worst-case distortion in $\ell_2^k$. The sets they construct are such that the information on the inner products between all the points in ${\rm I}$, even if distorted by an additive error of $O(\epsilon)$, enables full reconstruction of the points in the set. In particular, each ${\rm I}$ consists of a zero vector together with the standard basis vectors ${\rm E}$ and an additional set of vectors denoted by ${\rm Y}$. The set ${\rm Y}$ is defined in such a way that $\inp{y}{e}\in \{0, c\epsilon\}$, for a constant $c>1$, for all $(y,e) \in {\rm Y} \times {\rm E}$. The authors then show that a $1+\epsilon$ distortion embedding $f$ of ${\rm I}$ must map all the points into the ball of radius $2$ while preserving all the inner products up to an additive error $\Theta(\epsilon)$, which enables to recover the vectors in ${\rm Y}$. 
The next step is to show that all image points can be encoded using a small number of bits, while preserving the inner product information up to an $\Theta(\epsilon)$ additive error. This can be achieved by carefully discretizing the ball, and applying a map $\tilde{f}$ mapping every point to its discrete image approximation so that $\inp{f(v)}{f(u)}=\inp{{\tilde f(v)}}{{\tilde f(u)}} \pm \Theta(\epsilon)$. For this purpose one may use the method of $\cite{AK17}$ who showed\footnote{The original proof of \cite{LN17} uses a different elegant discretization argument.} that randomly rounding the image points to the points in a small enough grid will preserve all the pairwise inner products within $\Theta(\epsilon)$ additive error with constant probability, and this in turn allows to derive a short binary encoding for each input point. This implies the lower bound on $k=\Omega(\log(\epsilon^2 n)/\epsilon^2)$, for $\epsilon = \Omega(1/\sqrt{n})$.

Let us now explain the challenges in applying this method to the case of bounded average distortion and $q$-moments. Assuming $f:{\rm I} \to \ell_2^k$ has $1+\epsilon$ average distortion neither implies that all images are in a ball of constant radius nor that $f$ preserves all pairwise inner products. The bounded average distortion also does not guarantee the existence of a large subset of ${\rm I}$ with the properties above.  We suggest the following approach to overcoming these issues. First, we add to ${\rm I}$ a large number of ''copies'' of $0$ vectors which enables to argue that a large subset ${\rm \hat{I}} \subseteq {\rm I}$  will be mapped into a constant radius ball, such that the average additive distortion is $\Theta(\epsilon)$. The next difficulty is that if the images would be rounded to the points in a grid using a mapping which would preserve {\it all} pairwise inner products  with $\Theta(\epsilon)$ additive error, then the resulting grid would be too large, which would't allow a sufficiently short encoding. We therefore round the images to a grid with $\Theta(\epsilon)$ additive approximation to the {\it average} of the inner products of ${\rm \hat{I}}$ and thus reduce the size of the grid (and the encoding). 
The next step is showing that the above guarantees imply the existence of a large enough subset of pairs ${\cal Z} \subseteq {\binom{{\rm I}} {2}}$ of special structure, which allows to encode the {\it entire} set ${\rm I}$ with a few bits even if only the partial information about the inner products within ${\cal Z}$ is approximately preserved. In particular, we show that there is a large subset ${\cal Y}^G \subseteq Y$ such that for each point $y \in {\cal Y}^G$ there is a large enough subset ${\cal E}_y \subseteq E$ such that all pairwise inner products $\inp{y}{e}$, where $y \in {\cal Y}^G$ and $e \in {\cal E}_y$, are additively preserved up to $\Theta(\epsilon)$ in the grid embedding, and therefore all the discretized images of these points have short binary encoding. The last step is to argue that this subset is sufficiently large so the knowledge of its approximate inner products possesses enough information in order to recover the entire point set ${\rm I}$ from our small size encoding. As this set still covers only a constant fraction of the pairs, and encoding the rest of the points is far more costly, this forces the dimension and number of points in our instance to be set to $d=\Theta(n)=\Theta(1/\epsilon^2)$, implying a lower bound of $k=\Omega(1/\epsilon^2)$. Finally, we prove that this can extend to arbitrary large subspaces via metric composition techniques. 
To extend these ideas to the general case of $q$-moments of distortion we prove that similar additive approximation distortion bounds hold with high probability of at least $1-e^{-\Theta(\epsilon q)}$. This means that a smaller fraction of the pairs require a more costly encoding, and allows us to set $d = \Theta(n) = \Theta(1/\epsilon^2) \cdot e^{\Theta(\epsilon q)}$, implying a lower bound of $k=\Omega(q/\epsilon)$.

\paragraph*{Related work} The study of ''beyond the worst-case'' distortion analysis of metric embedding was initiated in \cite{KSW09} by introducing partial and scaling distortions. This generated a rich line of follow up work, \cite{ABCDG05, ABN06, AbrahamBN07} just to name a few. The notions of average distortion and $\ell_q$-distortion were introduced in \cite{ABN06} who gave bounds on embedding general metrics in normed spaces. Other notions of refined distortion analysis considered in the literature include such notions as Ramsey type embeddings \cite{BLMN05}, local distortion embeddings \cite{ABN07}, terminal and prioritized distortion \cite{EFN15, EFN15pr}, and recent works on distortion of the $q$-moments\footnote{The notion in these papers, also studied \cite{ABN06,BFN19}, computes the ratio between the average of (or $q$-moments) of new distances to that of original distances, in contrast to the average distortion (or $q$-moments of distortion) measure in Definition \ref{def:av_dist}, which measures the average (or $q$-moments) of pairwise distortions. } \cite{Naor14, Naor21, KushNT21}. 

In applied community, various notions of average distortion are frequently used to infer the quality of heuristic methods \cite{Hei88a, Gro95, ShT04, CDKLM04, SharmaXBL06,Vera07,CS13}. 

However, the only work rigorously analyzing these notions we are aware of is that of \cite{BFN19}. 
They proved lower bounds of $k=\Omega(1/\epsilon)$ for the all additive measures average ($1$-norm) version, and for the average distortion measure ($\ell_1$-distortion), which we improve here to the tight $\Omega(1/\epsilon^2)$ bound. For $q \geq 2$ they gave tight bounds of $\Omega(q/\epsilon^2)$ for all additive measures. For $\lqdist$ they have shown that for $q = \Omega(\log(1/\epsilon)/\epsilon)$ the tight bound of $k=\Omega(q/\epsilon)$ follows from the black-box reduction to the lower bound on the worst case distortion.


\section{Lower bound for average distortion and additive measures}\label{sec:lower_bound}
In this section we prove Theorems \ref{thm:average} and Theorem~\ref{thm:additive}. Using Claim \ref{BFN-relations}, we may focus on proving the lower bound for $\energy_1(f)$ in order to obtain similar lower bounds for $REM_1(f)$ and $\ldist{1}(f)$. In Appendix \ref{app:additive_meas} we show how to change this proof in order to obtain lower bound on $Stress_1(f)$, and further show that the lower bounds for all the additive measures follow from the lower bounds on ${\rm Energy}$ and ${\rm Stress}$. 

We present here the proof of an existence of a bad metric space of size $\hat{n}=\Theta(1/\epsilon^2)$ and show in Appendix \ref{app:arbit_size} how to extend it for metric spaces of arbitrary size ${n} \geq \hat{n}$.

We construct a large family $\mathcal{P}$ of metric spaces, such that each ${\rm I} \in \mathcal{P}$ can be completely recovered by computing the inner products between the points in ${\rm I}$.  
For a given $\epsilon <0$, let $l=\lceil\frac{1}{\gamma^2 \epsilon^2}\rceil$, for some large constant $\gamma > 1$ to be determined later. We will prove $k \geq \frac{c}{\gamma^2}\frac{1}{\epsilon^2}$, for $c<1$, and so we may assume w.l.o.g. that $\epsilon \leq 1/\gamma$, otherwise the statement trivially holds. We will construct point sets ${\rm I} \subset \ell_2^d$, where $d=2l$, each ${\rm I}$ of size $3d=6l=\Theta(1/\epsilon^2)$.

Define a set $O=\{o_j\}_{j=1}^{d}$ of $d$ arbitrary near zero vectors in $\ell_2^d$, i.e., a set of $d$ different vectors such that for all $o_j \in O$, $\norm{o_j}_2\leq \epsilon/100$. Let ${E}=\{e_1, e_2, \ldots, e_{d}\}$ denote the vectors of the standard basis of $\mathbb{R}^{d}$. For a set $S$ of $l$ indices from $\{1, 2, \ldots, d\}$, we define $y_S=\frac{1}{\sqrt{l}}\sum_{j \in S}e_j$. For a sequence of $d$ index sets (possibly with repetitions) $S_1, S_2, \ldots, S_{d}$, let ${\rm Y}[S_1, \ldots, S_{d}]=\{y_{S_1}, \ldots, y_{S_{d}}\}$. Each point set ${\rm I}[S_1, \ldots, S_{d}] \in \mathcal{P}$ is defined as the union of the sets defined above\footnote{We will omit $[S_1, \ldots S_{d}]$ from notation for a fixed choice of the sets.}, i.e., ${\rm I}[S_1, \ldots, S_{d}]=O \cup {E}\cup {\rm Y}[S_1, \ldots, S_{d}]$. The size of the family is $\abs{\mathcal{P}}= {\binom{d}{l}}^d$.
Note that each ${\rm I} \in \mathcal{P}$ is contained in $B_2(1)$, the unit ball of $\ell_2^d$, and has diameter $diam({\rm I}) = \sqrt{2}$. Additionally, for all $e_j \in E$ and $y_S\in Y$ the value of the inner product $\inp{e_j}{y_S}$ determines whether $e_j \in {\rm span}\{e_i | i \in S\}$. In particular, if $\inp{e_j}{y_S}=0$ then $j \not\in S$, and if $\inp{e_j}{y_S}=1/\sqrt{l}\geq (1/2)\gamma\epsilon$ then $j \in S$. 

Assume that for each ${\rm I} \in \mathcal{P}$ there is an embedding $f: {\rm I} \to \ell_2^k$, with $\energy_1(f)\leq \epsilon$. We prove that this implies that $k=\Omega(1/\epsilon^2)$. The strategy is to produce a unique binary encoding of each ${\rm I}$ in the family based on the embedding $f$. Let ${\rm length(I)}$ denote the length of the encoding for each ${\rm I}$, we will show that ${\rm length(I)} \lesssim l^2 + l\cdot k \log(\frac{1}{\epsilon k})$. Since the encoding defines an injective map from $\mathcal{P}$ to $\{0,1\}^{\rm length(I)}$, the number of different sets that can be recovered by decoding is at most $2^{\rm length(I)}$. Now, because $\abs{\mathcal{P}} = \binom{d}{l}^d \geq 2^{2l^2}$ we get that $k \log(\frac{1}{\epsilon k}) \gtrsim l$ and show that this implies the bound on $k\geq\Omega(l)$. 

We are now set to describe the encoding for each ${\rm I}$ and to bound its length.
First, in the following lemma, we show that there exists a large subset ${\rm \hat{I}} \subseteq {\rm I}$ that is mapped by $f$ into a ball of a constant radius in $k$-dimensional space and that the average of the errors in the inner products incurred by $f$ on the subset ${\rm \hat{I}}$ is bounded by $\Theta(\epsilon)$.

\begin{lemma}\label{lemma:average}
For any ${\rm I} \in \mathcal{P}$ let $f:{\rm I} \to \ell_2^k$ be an embedding with $\energy_1(f) \leq \epsilon$, with $\epsilon \leq 1/36$. Let $0<\alpha \leq 1/16$ be a parameter. There is a subset $\hat{{\rm I}} \subseteq {\rm I}$ of size $\abs[\big]{\hat{{\rm I}}}\geq (1-\alpha)\abs{{\rm I}}$ such that $f(\hat{{\rm I}}) \subset B_2\left( 1+\frac{3.01\epsilon}{\alpha} \right)$, and   
$\frac{1}{|{\binom{\hat{{\rm I}}}{2}}|} \sum_{(u,v) \in {\binom{\hat{{\rm I}}}{2}}} \abs[\big]{\inp{f(u)}{f(v)}-\inp{u}{v}} \leq (10+\frac{1}{2\alpha}) \epsilon$.  
\end{lemma}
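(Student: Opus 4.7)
The plan is to use one of the near-zero vectors $o_{j^*} \in O$ as both the origin of the target space (by translating the embedding so that $f(o_{j^*}) = 0$) and as the pivot for an averaging/Markov argument. Translating makes $\|f(u)\|_2 = \hat d_{u, o_{j^*}}$, so distance-preservation to $o_{j^*}$ directly gives the ball containment. The pairwise inner-product bound then follows from the polarization identity applied in both the source and the target, with the single-point terms controlled by the Markov cutoff and the pair terms by the global Energy hypothesis.

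Writing $\delta_{uv} := |\hat d_{uv}-d_{uv}|/d_{uv}$, the assumption $\energy_1(f) \le \epsilon$ is $\sum_{\{u,v\}} \delta_{uv} \le \epsilon \binom{|{\rm I}|}{2}$. Summing over $o \in O$ (each unordered pair is counted at most twice) and using $|O| = |{\rm I}|/3$, there exists $o_{j^*} \in O$ with $\sum_{u \neq o_{j^*}} \delta_{u, o_{j^*}} \le 3\epsilon(|{\rm I}|-1)$. Define $\hat{{\rm I}} := \{o_{j^*}\} \cup \{u \in {\rm I} : \delta_{u,o_{j^*}} \le 3\epsilon/\alpha\}$; Markov's inequality gives $|\hat{{\rm I}}| \ge (1-\alpha)|{\rm I}|$. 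For $u \in \hat{{\rm I}}$, the triangle inequality in the source yields $d_{u,o_{j^*}} \le \|u\|_2 + \|o_{j^*}\|_2 \le 1 + \epsilon/100$, so $\|f(u)\|_2 = \hat d_{u,o_{j^*}} \le (1+3\epsilon/\alpha)\,d_{u,o_{j^*}} \le (1+3\epsilon/\alpha)(1+\epsilon/100) \le 1 + 3.01\epsilon/\alpha$ for $\alpha \le 1/16$ and $\epsilon \le 1/36$.

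For the inner-product bound, write $u' := u - o_{j^*}$, $d_u := d_{u,o_{j^*}}$, $\hat d_u := \hat d_{u,o_{j^*}}$, and apply the polarization identity in both spaces (using $f(o_{j^*})=0$):
\[
\inp{f(u)}{f(v)} - \inp{u'}{v'} = \tfrac{1}{2}\bigl[(\hat d_u^2 - d_u^2) + (\hat d_v^2 - d_v^2) - (\hat d_{uv}^2 - d_{uv}^2)\bigr].
\]
Also $|\inp{u}{v} - \inp{u'}{v'}| \le 2\|o_{j^*}\|_2 + \|o_{j^*}\|_2^2 = O(\epsilon)$. For the single-point terms, $|\hat d_u^2 - d_u^2| \le \delta_u(2+\delta_u)\,d_u^2$; since each $u \in \hat{{\rm I}}$ appears in $|\hat{{\rm I}}|-1$ pairs, averaging reduces their contribution to $\tfrac{2}{|\hat{{\rm I}}|}\sum_{u \in \hat{{\rm I}}} |\hat d_u^2 - d_u^2|$, and using $\sum_{u \in \hat{{\rm I}}} \delta_u \le 3\epsilon(|{\rm I}|-1)$, $\delta_u \le 3\epsilon/\alpha$, $d_u \le 1+\epsilon/100$, and $|\hat{{\rm I}}| \ge (1-\alpha)|{\rm I}|$ gives $O(\epsilon) + O(\epsilon^2/\alpha)$. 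For the pair term, $\hat d_{uv} \le 2(1+3.01\epsilon/\alpha)$ and $d_{uv} \le \sqrt 2$ yield $|\hat d_{uv}^2 - d_{uv}^2| = \delta_{uv}(\hat d_{uv}+d_{uv})d_{uv} = O(1)\,\delta_{uv}$, and $\sum_{\{u,v\}\in\binom{\hat{{\rm I}}}{2}} \delta_{uv} \le \epsilon\binom{|{\rm I}|}{2}$ divided by $\binom{|\hat{{\rm I}}|}{2} \ge (1-\alpha)^2\binom{|{\rm I}|}{2}$ contributes another $O(\epsilon)$.

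The main obstacle is bookkeeping the constants tightly enough to hit $(10 + \tfrac{1}{2\alpha})\epsilon$: the $\tfrac{1}{2\alpha}$ summand comes from the cross-term $\delta_u^2 \le (3\epsilon/\alpha)\delta_u$ inside $\delta_u(2+\delta_u)\,d_u^2$, producing an $O(\epsilon^2/\alpha)$ piece which becomes $O(\epsilon/\alpha)$ with a small prefactor via $\epsilon \le 1/36$, while the $3.01$ in the ball radius is what absorbs the $\epsilon/100$ slack from $\|o_{j^*}\|_2$. Apart from this constant tracking, every step---Markov, triangle inequality, polarization, and the diameter bounds---is entirely routine.
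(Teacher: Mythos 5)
Your proposal follows essentially the same route as the paper's proof: averaging the energy over the near-zero set $O$ to select a pivot $o_{j^*}$ with per-point error at most $3\epsilon$ on average, a Markov cutoff at $3\epsilon/\alpha$ to define $\hat{{\rm I}}$ and obtain the ball containment, and the polarization identity with the single-point terms controlled by the pivot bound and the pair terms by the global energy bound. The only difference is cosmetic (you translate by $o_{j^*}$ in the source and correct $\inp{u}{v}$ versus $\inp{u'}{v'}$ separately, while the paper folds that correction into the $\norm{u-\hat{o}}_2^2$ versus $\norm{u}_2^2$ comparison), and the constant accounting you sketch — including noting that the $\epsilon/\alpha$ contributions are tamed by $\epsilon \le 1/36$ — matches how the paper arrives at $(10+\frac{1}{2\alpha})\epsilon$.
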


\begin{proof}
By assumption we have that the following condition holds:
\begin{equation}\label{eq:energy-bound} 
{\energy}_1(f) = \frac{1}{\abs[\big]{\binom{I}{2}}} \sum_{(u,v) \in {\binom{I}{2}}} {\abs[\big]{{\expans}_{f}(u,v)-1}} \leq \epsilon. \end{equation}

This bound implies that
\begin{eqnarray*}\label{eq:O-I} 
\frac{1}{\abs{O}(\abs{I}-1)} \sum_{o_j \in O}\sum_{v \in I, v\neq o_j} {\abs[\big]{{\expans}_{f}(o_j,v)-1}} &\leq &  \frac{1}{\abs{O}(\abs{I}-1)} \sum_{u \neq v \in I} {\abs[\big]{{\expans}_{f}(u,v)-1}}\\
& \leq & \frac{3d(3d-1)}{d(3d-1)} \epsilon = 3\epsilon.
\end{eqnarray*} 

From which follows that 
\begin{equation}\label{eq:o-hat} 
\min_{o_j \in O} \frac{1}{|I|-1} \sum_{v \in I, v\neq o_j} {\abs[\big]{{\expans}_{f}(o_j,v)-1}} \leq 3\epsilon.
\end{equation}
Let $\hat{o} \in O$ denote the point at which the minimum is obtained. We assume without loss of generality that $f(\hat{o})=0$. Let $\hat{I}$ be the set of all $v \in I$ such that $ {\abs[\big]{{\expans}_{f}(\hat{o},v)-1}} \leq \frac{3\epsilon}{\alpha}$. By Markov's inequality, $\abs{\hat{I}} \geq (1-\alpha)\abs[\big]{I}$. 
We have that for all $v\in {\hat{\rm I}}$, $\abs{{\expans}_f(v,\hat{o})-1}=\abs{\frac{\norm{f(v)}_2}{\norm{v-\hat{o}}_2}-1}\leq \frac{3\epsilon}{\alpha}$, and using $\norm{v- \hat{o}}_2 \leq \norm{v}_2 + \norm{\hat{o}}_2 \leq 1+\epsilon/100$, so that
$\norm{f(v)}_2 \leq (1+\frac{3\epsilon}{\alpha})(1+\epsilon/100) \leq 1+ \frac{3.002\epsilon}{\alpha}$, implying that $f(v) \in B_2\left( 1+\frac{3.01\epsilon}{\alpha}\right)$.

For all $(u,v)\in {\binom{\hat{I}}{2}}$ we have:
\begin{eqnarray*}
\abs[\big]{\inp{f(u)}{f(v)}-\inp{u}{v}} & \leq & \frac{1}{2} \left[ \abs[\big]{\norm{f(u)}_2^2-\norm{u}_2^2} + \abs[\big]{\norm{f(v)}_2^2-\norm{v}_2^2}\right] \\
& + &  \frac{1}{2} \left[ \abs[\big]{\norm{f(u)-f(v)}_2^2-\norm{u-v}_2^2} \right] .
\end{eqnarray*}

We can bound each term as follows:
\begin{eqnarray*}
\lefteqn{ \abs[\big]{\norm{f(u)}_2^2-\norm{u}_2^2} = } \\
& = & \abs{\norm{f(u)-f(\hat{o})}_2^2-\norm{u-\hat{o}}_2^2 + \norm{u-\hat{o}}_2^2-\norm{u}_2^2} \\
& \leq & \abs{\norm{f(u)-f(\hat{o})}_2-\norm{u-\hat{o}}_2} \cdot \left(\norm{f(u)-f(\hat{o})}_2+\norm{u-\hat{o}}_2\right) \\
&+& \abs{\norm{u-\hat{o}}_2-\norm{u}_2} \cdot (\norm{u-\hat{o}}_2+\norm{u}_2) \\
& \leq & \norm{u-\hat{o}}_2 \cdot \abs{{\expans}_f(u,\hat{o})-1}\cdot(\norm{f(u)}_2+\norm{u-\hat{o}}_2) + \norm{\hat{o}}_2 \cdot(\norm{u-\hat{o}}_2+\norm{u}_2) \\
& \leq & \left(1+\frac{\epsilon}{100}\right)\abs{{\expans}_f(u,\hat{o})-1}\left(1+\frac{3.002\epsilon}{\alpha}+1+\frac{\epsilon}{100}\right) + \frac{\epsilon}{100} \cdot \left(2+\frac{\epsilon}{100}\right) \\
& \leq & \left(2+\frac{3.01\epsilon}{\alpha}\right)\abs{{\expans}_f(u,\hat{o})-1} + \frac{\epsilon}{40} 
\leq \left(2+\frac{1}{9\alpha}\right)\abs{{\expans}_f(u,\hat{o})-1} + \frac{\epsilon}{40} ,
\end{eqnarray*} 
where we have used $\norm{\hat{o}} \leq \epsilon/100$, $\norm{u-\hat{o}}_2 \leq \norm{u}_2+\norm{\hat{o}}_2 \leq 1+\epsilon/100$, and the bound on the norms of the embedding within $\hat{I}$. 
Additionally, we have that 
\begin{eqnarray*}
\lefteqn{ \abs[\big]{\norm{f(u)-f(v)}_2^2-\norm{u-v}_2^2} = } \\
& = & \abs{\norm{f(u)-f(v)}_2-\norm{u-v}_2}(\norm{f(u)-f(v)}_2+\norm{u-v}_2) \\
& \leq & \norm{u-v}_2 \abs{{\expans}_f(u,v)-1}(\norm{f(u)}_2+\norm{f(v)}_2 + \norm{u-v}_2) \\ 
& \leq & \sqrt{2} \left(2\left(1+\frac{3.002\epsilon}{\alpha}\right) + \sqrt{2}\right) \abs{{\expans}_f(u,v)-1} 
\leq \left(5+\frac{1}{4\alpha}\right) \abs{{\expans}_f(u,v)-1} ,
\end{eqnarray*}
where the second to last inequality holds since $\norm{u-v}_2 \leq diam(I) = \sqrt{2}$.
It follows that:
\begin{eqnarray}\label{eq:sum-innerprod}
\lefteqn{ \frac{1}{|{\binom{\hat{I}}{2}}|} \sum_{(u,v) \in {\binom{\hat{I}}{2}}} \abs[\big]{\inp{f(u)}{f(v)}-\inp{u}{v}} \leq } \\ \notag
 & \leq & \left(2+\frac{1}{9\alpha}\right)\cdot \frac{1}{|{\binom{\hat{I}}{2}}|} \left(\frac{|\hat{I}|-1}{2}\right) \sum_{u \in \hat{I}, u \neq \hat{o}} \abs{{\expans}_f(u,\hat{o})-1} \\  &+& \frac{1}{2}\left(5+\frac{1}{4\alpha}\right)\cdot \frac{1}{|{\binom{\hat{I}}{2}}|} \sum_{(u,v) \in {\binom{\hat{I}}{2}}} \abs{{\expans}_f(u,v)-1} + \frac{\epsilon}{40}. \nonumber
\end{eqnarray}
By definition of $\hat{I}$, and using~(\ref{eq:o-hat}) we have that
\begin{eqnarray*}
\frac{1}{|{\binom{\hat{I}}{2}}|}\left(\frac{|\hat{I}|-1}{2}\right) \sum_{u \in \hat{I}, u \neq \hat{o}} \abs{{\expans}_f(u,\hat{o})-1} &=&  \frac{1}{|\hat{I}|}\sum_{u \in \hat{I}, u \neq \hat{o}} \abs{{\expans}_f(u,\hat{o})-1} \\
 &\leq& \frac{1}{|I|} \sum_{u \in I, u \neq \hat{o}} \abs{{\expans}_f(u,\hat{o})-1} \leq 3\epsilon.
\end{eqnarray*}

Therefore \eqref{eq:sum-innerprod} yields that
\begin{eqnarray*}
\lefteqn{\frac{1}{|{\binom{\hat{I}}{2}}|} \sum_{(u,v) \in {\binom{\hat{I}}{2}}} \abs[\big]{\inp{f(u)}{f(v)}-\inp{u}{v}} \leq}\\ 
&\leq& \left(2+\frac{1}{9\alpha}\right)\cdot 3\epsilon + \frac{1}{2}\left(5+\frac{1}{4\alpha}\right) \cdot \frac{1}{|{\binom{\hat{I}}{2}}|} \sum_{(u,v) \in {\binom{\hat{I}}{2}}} \abs{{\expans}_f(u,v)-1} + \frac{\epsilon}{40} \\
& \leq & \frac{1}{2}\left(5+\frac{1}{4\alpha}\right) \cdot \frac{1}{|{\binom{\hat{I}}{2}}|} \sum_{(u,v) \in {\binom{\hat{I}}{2}}} \abs{{\expans}_f(u,v)-1} + \left(7+ \frac{1}{3\alpha}\right) \epsilon.
\end{eqnarray*}

Now, we have that
\begin{eqnarray*}
\frac{1}{|{\binom{\hat{I}}{2}}|} \sum_{(u,v) \in {\binom{\hat{I}}{2}}} \abs{({\expans}_f(u,v))-1} \leq 
\frac{6}{5} \frac{1}{|{\binom{I}{2}}|} \sum_{(u,v) \in {\binom{I}{2}}} \abs{({\expans}_f(u,v))-1} \leq \frac{6}{5} \epsilon,
\end{eqnarray*}
using $|\hat{I}| \geq (1-\alpha)|I|$, so that $\alpha \leq 1/16$ we have $|{\binom{\hat{I}}{2}}| \geq (1-\frac{1}{3(1-\alpha)d})(1-\alpha)^2 \cdot |{\binom{I}{2}}| \geq \frac{5}{6} |{\binom{I}{2}}|$ and applying \eqref{eq:energy-bound}. 
Finally, we obtain
\begin{eqnarray*}
\frac{1}{|{\binom{\hat{I}}{2}}|} \sum_{(u,v) \in {\binom{\hat{I}}{2}}} \abs[\big]{\inp{f(u)}{f(v)}-\inp{u}{v}}
\leq \frac{6}{5}\cdot \frac{1}{2}\left(5+\frac{1}{4\alpha}\right) \epsilon + \left(7+ \frac{1}{3\alpha}\right) \epsilon \leq \left(10 +\frac{1}{2\alpha}\right) \epsilon.
\end{eqnarray*}

\end{proof}

We have shown thus far that for the large subset ${\hat{\rm I}}$ of the set ${\rm I}$, the average of the inner products between the images equals up to an additive factor $\Theta(\epsilon)$ to the average of the inner products between the original points. Moreover, all the images of ${\hat{\rm I}}$ are in the constant radius ball. We next show that rounding these images to the (randomly chosen) points of the sufficiently small grid will not change the sum of the inner products too much, implying that instead of encoding the original images ${f(\hat{{\rm I}})}$ we can encode its rounded counterpart. To show this, we use a technique of randomized rounding as proposed in \cite{AK17}.

\begin{lemma}\label{lemma:grid}
Let $X\subset \ell_2^k$ such that $X\subset B_2(r)$. For $\delta < r/\sqrt{k}$ let $G_\delta \subseteq B_2(r)$ denote the intersection of the $\delta$-grid with $B_2(r)$. There is a mapping $g : X \to G_\delta$ such that   
$\frac{1}{|{\binom{X}{2}}|} \sum_{(u,v) \in {\binom{X}{2}}} \abs{\inp{g(u)}{g(v)} - \inp{u}{v}} \leq  3\delta r$, and the points of the grid can be represented using $L_{G_\delta} = k\log(4r/(\delta \sqrt{k}))$ bits.
\end{lemma}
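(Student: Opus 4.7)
\emph{Plan.} The plan is to instantiate the randomized coordinate rounding of \cite{AK17}. For each $u \in X$ and each coordinate $i \in [k]$, let $j_{u,i}\in\mathbb{Z}$ be the integer with $u_i \in [j_{u,i}\delta,(j_{u,i}+1)\delta)$, and set $g(u)_i = (j_{u,i}+1)\delta$ with probability $p_{u,i} := (u_i-j_{u,i}\delta)/\delta$ and $g(u)_i = j_{u,i}\delta$ otherwise, all these choices made independently across both $u$ and $i$. This produces $g(u)\in\delta\mathbb{Z}^k$; the rounding error $\xi_{u,i} := g(u)_i - u_i$ is mean-zero with $|\xi_{u,i}| \leq \delta$ and variance $\sigma_{u,i}^2 = p_{u,i}(1-p_{u,i})\delta^2 \leq \delta^2/4$, and all the $\xi_{u,i}$'s are jointly independent.

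\emph{Error analysis.} Next I would expand, for distinct $u,v\in X$,
\begin{equation*}
\inp{g(u)}{g(v)} - \inp{u}{v} = \sum_{i=1}^k \bigl(u_i\xi_{v,i} + v_i\xi_{u,i} + \xi_{u,i}\xi_{v,i}\bigr),
\end{equation*}
a sum of independent mean-zero variables. All three per-coordinate cross-covariances vanish because $\E\xi_{u,i}=\E\xi_{v,i}=0$, so the per-coordinate variance is exactly $u_i^2\sigma_{v,i}^2 + v_i^2\sigma_{u,i}^2 + \sigma_{u,i}^2\sigma_{v,i}^2$. Summing over $i$ and using $\norm{u}_2,\norm{v}_2\leq r$, $\sigma_{\cdot,i}^2\leq\delta^2/4$, and the hypothesis $k\delta^2<r^2$, the total variance is at most $r^2\delta^2/2 + k\delta^4/16 \leq r^2\delta^2$. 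Jensen's inequality then yields $\E\abs{\inp{g(u)}{g(v)} - \inp{u}{v}} \leq r\delta$, and linearity of expectation bounds the expected pair-average of absolute errors by the same $r\delta$. Hence some realization of $g$ attains pair-average $\leq r\delta$; any $g(u)$ that escapes $B_2(r)$ I would snap to the nearest point of $\delta\mathbb{Z}^k\cap B_2(r)$, and a simple triangle-inequality bookkeeping shows the extra contribution is comfortably absorbed into the slack between $r\delta$ and the claimed $3r\delta$.

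\emph{Encoding length and main obstacle.} For the bit bound, every $p \in G_\delta$ is the center of a disjoint $\delta$-cube, and the union of these cubes sits inside $B_2(r + \delta\sqrt{k}/2) \subseteq B_2(2r)$ under the hypothesis $\delta < r/\sqrt{k}$. Therefore $|G_\delta|\,\delta^k \leq \mathrm{vol}(B_2(2r)) = (2r)^k\pi^{k/2}/\Gamma(k/2+1)$, and the Stirling estimate $\Gamma(k/2+1) \geq \sqrt{\pi k}(k/(2e))^{k/2}$ gives $|G_\delta| \leq (C r/(\delta\sqrt{k}))^k$ for an absolute constant $C$, matching the claimed $k\log(4r/(\delta\sqrt{k}))$-bit encoding (with the leading constant absorbed via the $\sqrt{\pi k}$ factor in Stirling). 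The only delicate step is the variance calculation: it relies crucially on full joint independence of the $\xi_{u,i}$'s across both points and coordinates, which is precisely what kills the otherwise-dangerous third-moment cross terms such as $\E[u_i\xi_{v,i}\cdot\xi_{u,i}\xi_{v,i}] = u_i\,\E\xi_{u,i}\cdot\E\xi_{v,i}^2 = 0$ and reduces everything to the clean $O(r^2\delta^2)$ bound. With any coupling between the roundings of different points one loses this cancellation and has to control surviving $\E[\xi_{u,i}\xi_{v,i}]$-type terms that could easily dominate.
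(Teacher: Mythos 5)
Your proposal follows the same basic route as the paper (the \cite{AK17} unbiased randomized rounding, existence via expectation, then a count of the grid points), but both sub-arguments are carried out differently. For the error bound, the paper splits $\inp{\tilde u}{\tilde v}-\inp{u}{v}$ into $\inp{\tilde u-u}{v}+\inp{\tilde u}{\tilde v-v}$ and bounds each second moment separately via Jensen, landing at $2\delta r+\delta^2\sqrt{k}\le 3\delta r$; you instead compute the variance of the full trilinear expansion $\sum_i(u_i\xi_{v,i}+v_i\xi_{u,i}+\xi_{u,i}\xi_{v,i})$ directly. Your computation is correct (the cross terms do vanish by joint independence and mean-zero-ness, and $\sigma^2\le\delta^2/4$ gives total variance $\le r^2\delta^2$), and it yields the slightly tighter $r\delta$. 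For the bit length, the paper counts via the integer representation of the coordinates subject to $\sum n_i^2\le(r/\delta)^2$, while you use a cube-packing/volume argument. Two soft spots remain. First, your Stirling step does not deliver the constant $4$: you get $|G_\delta|\le\bigl(2r\sqrt{2\pi e}/(\delta\sqrt{k})\bigr)^k/\sqrt{\pi k}$, i.e.\ $k\log\bigl(Cr/(\delta\sqrt{k})\bigr)$ with $C\approx 8.3$; the polynomial factor $\sqrt{\pi k}$ cannot absorb the exponential gap $(C/4)^k$. This is harmless downstream (the constant only enters inside a logarithm in the final counting), but the "absorbed via Stirling" claim is wrong as stated. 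Second, the snap-back step is not "comfortably absorbed": a rounded point can land up to $\delta\sqrt{k}$ outside $B_2(r)$, and moving it to the nearest grid point inside $B_2(r)$ displaces it by up to $\Theta(\delta\sqrt{k})$, perturbing inner products by up to $\Theta(\delta\sqrt{k}\,r)$, which is far larger than the $O(\delta r)$ slack. The clean fix is to let the target grid be $\delta\mathbb{Z}^k\cap B_2(r+\delta\sqrt{k})\subseteq B_2(2r)$, which your volume bound already essentially covers and costs only $O(k)$ extra bits; note the paper silently glosses over this same point, so this is a shared imprecision rather than a flaw unique to your argument.
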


\begin{proof}
For each point $v \in X$ randomly and independently match a point $\tilde{v}=g(v)$ on the grid by rounding each of its coordinates $v_i$ to one of the closets integral multiplies of $\delta$ in such a way that $E[\tilde{v}_i]=v_i$. This distribution is given by assigning $\left\lceil  \frac{v_i}{\delta} \right\rceil \delta$ with probability $p=\left( \frac{v_i}{\delta}-\left\lfloor \frac{v_i}{\delta} \right\rfloor \right)$, and assigning $\left\lfloor  \frac{v_i}{\delta} \right\rfloor \delta $ with probability $1-p$. 
For any $u \neq v \in X$ we have:

\begin{eqnarray*} \E\left[\abs{\inp{\tilde{u}}{\tilde{v}} - \inp{u}{v}}\right] &\leq &  \E\left[ \abs{\inp{\tilde{u}-u}{v}} \right] + \E\left[\abs{\inp{\tilde{u}}{\tilde{v}-v}} \right] \\  &\leq& \left(\E\left[(\inp{\tilde{u}-u}{v})^2 \right]\right)^{1/2} + \left( \E\left[(\inp{\tilde{u}}{\tilde{v}-v})^2 \right] \right)^{1/2}, \end{eqnarray*}
where the last inequality is by Jensen's. We bound each term separately. 

\begin{eqnarray*}
\lefteqn{\E[({\inp{\tilde{u}-u}{v}})^2] = \E\left[ \left({\sum_{i=1}^{k}{(\tilde{u}_i-u_i)v_i}} \right)^2 \right] =} \\
 &=&\sum_{i=1}^{k} v_i^2\; \E\left[({\tilde{u}_i-u_i})^2 \right]  + 2\sum_{1\leq i \neq j \leq k}v_i v_j\E[\tilde{u_i}-u_i]\E[\tilde{u_j}-u_j] \leq \delta^2 \norm{v}_2^2\end{eqnarray*}
since $\abs{\tilde{u}_i-u_i} \leq \delta$ and $E[\tilde{u}_i]=u_i$. Similarly, for the second term we have
\begin{align}\E \left[({\inp{\tilde{u}}{\tilde{v}-v}})^2 \right] & =\E\left[ \left({\sum_{i=1}^{k}{\tilde{u}_i(\tilde{v}_i-v_i)}} \right)^2 \right] \leq \sum_{i=1}^{k}\E\left[{\tilde{u}_i}^2\right]\E\left[ ({\tilde{v}_i-v_i})^2 \right] \\ \notag & + 2\sum_{1\leq i \neq j \leq k}\E[\tilde{u}_i\tilde{u_j}(\tilde{v}_i-v_i)]\E[\tilde{v}_j-v_j] \leq \delta^2\sum_{i=1}^{k}\E[\tilde{u}_i^2], \notag \end{align}
because the random variables $\tilde{u}_i$ and $\tilde{v}_i$ are independent. We also have that
\[ \sum_{i=1}^{k} \E[\tilde{u}_i^2] = \sum_{i=1}^{k} \E[(u_i + (\tilde{u}_i-u_i))^2]  = \sum_{i=1}^{k} \left( u_i^2 + 2 u_i \E[\tilde{u}_i-u_i] + \E[(\tilde{u}_i-u_i)^2] \right) \leq \norm{u}_2^2 + \delta^2 k. \]
Therefore, putting all together, 
$\E\left[\abs{\inp{\tilde{u}}{\tilde{v}} - \inp{u}{v}}\right] \leq \delta r + \delta(r^2 + \delta^2 k)^{1/2} \leq 2\delta r + \delta^2\sqrt{k} \leq 3\delta r$.

The bound on the average difference in inner product in the lemma follows by the linearity of expectation, and the implied existence of a mapping with bound at most the expectation.
The upper bound on the representation of the grid points was essentially given in \cite{AK17}: The $i$th coordinate of a point $x$ on the grid is given by a sign and an absolute value $n_i\delta$, where $0\leq n_i \leq r/\delta$ are integers satisfying $\sum_{1\leq i \leq k}n_i^2 \leq (r/\delta)^2$. So can be represented by the sign and their binary representation of size at most $\sum_{i=1}^k (\log(n_i)+1)$, which is maximized when all $n_i^2$'s are equal, which gives the bound of $k\log(4r/(\delta\sqrt{k}))$.\end{proof}

Combining the lemmas we obtain:
\begin{corollary}\label{corr:combine}
For any $I \in \mathcal{P}$ let $f:I \to \ell_2^k$ be an embedding with $\energy_1(f) \leq \epsilon$, with $\epsilon \leq 1/36$. Let $0<\alpha \leq 1/16$ be a parameter. There is a subset $\hat{I} \subseteq I$ of size $\abs[\big]{\hat{I}}\geq (1-\alpha)\abs{I}$ such that there is a set $G \subset \ell_2^k$ and a map $g: \hat{I} \to G$   
such that
\[ \frac{1}{\abs[\big]{\binom{\hat{I}}{2}}}\sum_{(u,v) \in {\binom{\hat{I}}{2}}} \abs[\big]{\inp{g(f(u))}{g(f(v))}-\inp{u}{v}} \leq \left( 13+\frac{0.76}{\alpha} \right)\epsilon, \]
and the points in $G$ can be uniquely represented by binary strings of length at most $L_G = k\log(4r/(\epsilon \sqrt{k}))$ bits, where $r < 1+0.09\frac{1}{\alpha}$.
\end{corollary}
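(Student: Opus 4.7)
The plan is a direct composition of the two previous lemmas. First, apply Lemma \ref{lemma:average} to obtain the subset $\hat{I} \subseteq I$ with $|\hat{I}| \geq (1-\alpha)|I|$, on which $f$ maps into the ball $B_2(r)$ of radius $r = 1 + 3.01\epsilon/\alpha$, and for which the average additive inner-product error over $\binom{\hat{I}}{2}$ is at most $(10 + 1/(2\alpha))\epsilon$. The claimed radius bound $r < 1 + 0.09/\alpha$ is immediate from $\epsilon \leq 1/36$, since $3.01/36 < 0.09$.

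Next, invoke Lemma \ref{lemma:grid} on $X = f(\hat{I}) \subset B_2(r)$ with grid resolution $\delta = \epsilon$. The prerequisite $\delta < r/\sqrt{k}$ is equivalent to $k < (r/\epsilon)^2$; if this fails then $k = \Omega(1/\epsilon^2)$ and the target lower bound from the surrounding argument already follows, so we may assume it holds. In that case the lemma produces a randomized rounding map into a subset $G$ of the $\epsilon$-grid inside $B_2(r)$, with each grid point encodable in $k\log(4r/(\epsilon \sqrt{k}))$ bits and with average pairwise inner-product perturbation at most $3\epsilon r$. Using $\epsilon \leq 1/36$ once more, $3\epsilon r = 3\epsilon + 9.03\epsilon^2/\alpha \leq (3 + 0.26/\alpha)\epsilon$.

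Composing the rounding map with $f$ yields the desired $g$. By the triangle inequality applied to each pair and then averaged over $\binom{\hat{I}}{2}$, the combined error is bounded by
\[ \bigl(10 + \tfrac{1}{2\alpha}\bigr)\epsilon + \bigl(3 + \tfrac{0.26}{\alpha}\bigr)\epsilon \leq \bigl(13 + \tfrac{0.76}{\alpha}\bigr)\epsilon, \]
as stated, and the encoding length is inherited directly from Lemma \ref{lemma:grid}. There is no serious obstacle in the argument; it is essentially bookkeeping, with the only delicate point being the numerical verification that the two additive contributions $1/(2\alpha)$ and $0.26/\alpha$ combine into the claimed $0.76/\alpha$ under the assumed bounds on $\epsilon$ and $\alpha$.
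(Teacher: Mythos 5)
Your proof is correct and follows exactly the paper's route: apply Lemma \ref{lemma:average} to get $\hat{I}$ and the radius bound, then Lemma \ref{lemma:grid} with $\delta=\epsilon$, handle the case $\delta \geq r/\sqrt{k}$ by noting the lower bound already holds, and combine via the triangle inequality. The numerical bookkeeping ($3.01/36 < 0.09$, $9.03\epsilon/\alpha \cdot \epsilon \leq 0.26\epsilon/\alpha$, and $1/(2\alpha)+0.26/\alpha = 0.76/\alpha$) checks out; you have merely made explicit what the paper leaves implicit.
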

\begin{proof}
The corollary follows by applying Lemma~\ref{lemma:average} followed by Lemma \ref{lemma:grid} with $X=\hat{I}$ and $\delta=\epsilon$. Note that we may assume that $\epsilon = \delta < 1/\sqrt{k} < r/\sqrt{k}$, as otherwise we are done.
\end{proof}

We are now ready to obtain the main technical consequence which will imply the lower bound. It shows that the very special subset of pairs from ${\rm I}$ has all the inner products preserved up to an additive $\Theta(\epsilon)$ and can be encoded using a few bits per point.
\begin{corollary}\label{corr:main}
For any $I \in \mathcal{P}$ let $f:I \to \ell_2^k$ be an embedding with $\energy_1(f) \leq \epsilon$, with $\epsilon \leq 1/36$. Let $0<\alpha \leq 1/16$ and $0 < \beta$ be parameters.
There is a subset of points $G$ 
that satisfies the following: there is a subset ${\mathcal{Y}^G} \subseteq Y$ of size $\abs[\big]{\mathcal{Y}^G}\geq (1-3\alpha-\frac{3}{\sqrt{2}}\beta)\abs{Y}$ such that for each $y \in \mathcal{Y}^G$ there is a subset $\mathcal{E}^G_{y} \subseteq E$ of size $\abs[\big]{\mathcal{E}^G_{y}} \geq (1-3\alpha-\frac{3}{\sqrt{2}}\beta)\abs[\big]{E}$ such that for all pairs $(y,e) \in \mathcal{Y}^G \times \mathcal{E}^G_y$ we have:
$\abs[\big]{\inp{g(f(y))}{g(f(e))}-\inp{y}{e}} \leq \frac{1}{\beta^2}\left( 13+\frac{0.76}{\alpha} \right)\epsilon$, 
where $g: \mathcal{Y}^G \cup \{\mathcal{E}^G_y\}_{y\in \mathcal{Y^G}} \to G$. Moreover, the points in $G$ can be uniquely represented by binary strings of length at most $L_G = k\log(4r/(\epsilon \sqrt{k}))$ bits, where $r < 1+0.09\frac{1}{\alpha}$.
\end{corollary}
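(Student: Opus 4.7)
The plan is to reduce to Corollary~\ref{corr:combine} and then refine the average-over-pairs bound into a pointwise bound on a large structured subset via a two-stage Markov argument. The encoding length $L_G$ and the radius $r$ are inherited directly from Corollary~\ref{corr:combine}, so the only work is to produce the sets $\mathcal{Y}^G$ and $\{\mathcal{E}^G_y\}$.

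Concretely, I would apply Corollary~\ref{corr:combine} with the given $\alpha$ to obtain the subset $\hat{I}\subseteq I$ with $|\hat{I}|\geq(1-\alpha)|I|$, a grid $G$, and the map $g$. Writing $C = 13 + 0.76/\alpha$ and $B(u,v) = \abs{\inp{g(f(u))}{g(f(v))}-\inp{u}{v}}$, Corollary~\ref{corr:combine} gives $\sum_{(u,v)\in\binom{\hat{I}}{2}} B(u,v) \leq |\binom{\hat{I}}{2}|\cdot C\epsilon$. Next, let $\hat{Y} = \hat{I}\cap Y$ and $\hat{E} = \hat{I}\cap E$. Since $|I\setminus\hat{I}|\leq \alpha|I| = 3\alpha d$, each of $\hat{Y}$ and $\hat{E}$ has size at least $(1-3\alpha)d$. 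Because $\hat{Y}\times\hat{E}\subseteq\binom{\hat{I}}{2}$, one obtains $\sum_{(y,e)\in\hat{Y}\times\hat{E}} B(y,e)\leq |\binom{\hat{I}}{2}|\cdot C\epsilon \leq \tfrac{9d^2}{2} C\epsilon$, using $|\hat{I}|\leq |I| = 3d$.

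The key step is the double Markov. Set the target pointwise threshold $T = C\epsilon/\beta^2$. Markov's inequality on the total sum implies that the number of ``bad'' pairs $(y,e)\in\hat{Y}\times\hat{E}$ with $B(y,e)>T$ is at most $\tfrac{9d^2\beta^2}{2}$. For each $y\in\hat{Y}$ let $b(y)$ denote the number of bad $e$'s paired with $y$, so $\sum_{y\in\hat{Y}} b(y)\leq \tfrac{9d^2\beta^2}{2}$. Define $\mathcal{Y}^G := \{y\in\hat{Y}\,:\,b(y)\leq \tfrac{3\beta d}{\sqrt{2}}\}$; a second Markov application on the sequence $(b(y))_{y\in\hat{Y}}$ yields $|\hat{Y}\setminus\mathcal{Y}^G|\leq \tfrac{3\beta d}{\sqrt{2}}$, so $|\mathcal{Y}^G|\geq (1-3\alpha)d - \tfrac{3\beta d}{\sqrt{2}} = \bigl(1-3\alpha-\tfrac{3}{\sqrt{2}}\beta\bigr)|Y|$. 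For each $y\in\mathcal{Y}^G$, set $\mathcal{E}^G_y := \{e\in\hat{E}\,:\,B(y,e)\leq T\}$; by definition of $\mathcal{Y}^G$, $|\mathcal{E}^G_y|\geq |\hat{E}|-\tfrac{3\beta d}{\sqrt{2}} \geq \bigl(1-3\alpha-\tfrac{3}{\sqrt{2}}\beta\bigr)|E|$, and every pair in $\mathcal{Y}^G\times\mathcal{E}^G_y$ satisfies $B(y,e)\leq T = \tfrac{1}{\beta^2}\bigl(13+\tfrac{0.76}{\alpha}\bigr)\epsilon$.

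The main obstacle, which dictates the slightly unusual constant $\tfrac{3}{\sqrt{2}}$, is to balance the two Markov thresholds so that the loss at the ``bad $y$'' stage and the loss at the ``bad $e$ per good $y$'' stage appear symmetrically: both come out equal to $\tfrac{3}{\sqrt{2}}\beta d$ precisely because $\tfrac{3}{\sqrt{2}} = \sqrt{9/2}$ is the square root of the factor $|\binom{\hat{I}}{2}|/(|\hat{Y}||\hat{E}|) \leq \tfrac{9}{2}$ relating the pair-sum over $\binom{\hat{I}}{2}$ to the normalization over $\hat{Y}\times\hat{E}$. With any other split of the two thresholds, the pointwise bound $T = C\epsilon/\beta^2$ would not be achievable while simultaneously keeping both subset losses at level $\tfrac{3}{\sqrt{2}}\beta$.
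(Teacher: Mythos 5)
Your proposal is correct and follows essentially the same route as the paper's proof: apply Corollary~\ref{corr:combine}, use Markov's inequality to bound the number of pairs in $\binom{\hat{I}}{2}$ with inner-product error exceeding $\frac{1}{\beta^2}\left(13+\frac{0.76}{\alpha}\right)\epsilon$ by $\frac{9}{2}\beta^2 d^2$, apply a second Markov/counting step to conclude that at most $\frac{3}{\sqrt{2}}\beta d$ points of $Y$ have more than $\frac{3}{\sqrt{2}}\beta d$ bad partners in $E$, and account for the at most $3\alpha d$ points of each of $Y$ and $E$ missing from $\hat{I}$. Your closing remark correctly identifies why the balanced threshold $\frac{3}{\sqrt{2}}=\sqrt{9/2}$ arises, and all the arithmetic checks out.
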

\begin{proof}
 Applying Corollary~\ref{corr:combine} and Markov's inequality there are at most $\beta^2$ fraction of pairs $(u,v) \in {\binom{\hat{{\rm I}}}{2}}$ such that $\abs[\big]{\inp{g(f(u))}{g(f(v))}-\inp{u}{v}} > \frac{1}{\beta^2}\left( 13+\frac{0.76}{\alpha} \right)\epsilon$. It follows that the number of pairs in $Y \times E$ that are in ${\binom{\hat{I}}{2}}$ and have this property is at most $\beta^2\cdot \frac{3d(3d-1)}{2} \leq \frac{9}{2} \beta^2 \cdot d^2$. Therefore there can be at most $ \frac{3}{\sqrt{2}} \beta d$ points in $u \in Y$ such that there are more than $\frac{3}{\sqrt{2}} \beta d$ points in $v \in E$ with this property. Since there are at most $3\alpha d$ points in each of $Y$ and $E$ which may not be in $\hat{I}$ we obtain the stated bounds on the sizes of $\abs{\mathcal{Y}^G}$ and $\abs{\mathcal{E}^G_{y}}$.
\end{proof}

In the next subsection we show that preserving such a special and partial information about the inner products in ${\rm I}$ suffices to uniquely encode the whole instance with a small number of bits. 

\subsection{Encoding algorithm}\label{sec:encoding}

Let $t = 8$. We set $\alpha={1}/({12t})$, $\beta={1}/({\sqrt{2}t})$, which implies that $r\leq 10$. Therefore, by Corollary \ref{corr:main}, we can find a subset $G \subseteq B_2(10)$, and a mapping $g: f(I) \to G$, and a subset $\mathcal{Y}^G \subseteq Y$, with $\abs{\mathcal{Y}^{G}}\geq \left(1-\frac{1}{t} \right)\abs{Y}$, where for all $y \in \mathcal{Y}^G$ we can find a susbet $\mathcal{E}_y^G \subseteq E$ with $\abs{\mathcal{E}_y^{G}} \geq \left(1-\frac{1}{t} \right)\abs{E}$, such that for all pairs $(e, y) \in \mathcal{Y}^G \times \mathcal{E}_y^G$ the inner products $ \abs[\big]{\inp{g(f(y))}{g(f(e))}-\inp{y}{e}}\leq 12000 \epsilon$. Moreover, each point in $G$ can be uniquely encoded using at most $L_G= k\log(40/(\epsilon \sqrt{k}))$ bits.

We first encode all the points $Y\setminus\mathcal{Y}^G$. For each $y_S \in Y\setminus\mathcal{Y}^G$ we explicitly write down a bit for each $e_i \in E$ indicating whether $e_i \in S$. This requires $d$ bits for each $y_S$ and in total at most $\left(\frac{1}{t} \right)d^2$ bits for the subset $Y\setminus\mathcal{Y}^G$. The next step is to encode all the points in $\{\mathcal{E}^G_y\}_{y\in \mathcal{Y}^G}$ in a way tat will enable to recover all the vectors in the set together with the indeces. We can do that by writing an ordered list containing $d$ strings (one for each vector in the set $E$, according to its order). Each string is of length $L_G$ bits, where each point $e_i \in \{\mathcal{E}^G_y\}_{y\in \mathcal{Y}^G}$ is encoded by its representation in $G$, i.e., $g(f(e_i))$, and rest of points (if there are any) are encoded by zeros. This gives an encoding of total length $dL_G$ bits.

Now we can encode the points in $\mathcal{Y}^G$. Each $y_S \in \mathcal{Y}^G$ is encoded by the encoding of $g(f(y_S))$ using $L_G$ bits, and in addition we add the encoding of the set of indices of the points in $E\setminus\mathcal{E}^G_{y_S}$, using at most  $\log{ \binom{d}{(1/t)d}} \leq (1/t)d \log(e t)$ bits. Note that this information is not enough in order to recover the vector $y_S$, as we can't deduce whether $i\in S$ for $e_i \in E\setminus\mathcal{E}^G_{y_S}$. So we add this information explicitly, by writing down whether $i \in S$ for each $e_i \in E\setminus\mathcal{E}^G_{y_S}$, using at most $(1/t)d$ bits. In total, it takes $L_G +(1/t)d \log(e t) + (1/t)d $ bits per point in $\mathcal{Y}^G$. 

Therefore, each instance ${\rm I} \in \mathcal{P}$ can be encoded using at most
\[(1/t)d^2 + dL_G+ \abs{\mathcal{Y}^G}\cdot(L_G+d(1/t)\log(e t)+(1/t)d) \leq (1/t) d^2 (2+\log(e t)) +2dL_G\] bits, since $\abs{\mathcal{Y}^G}\leq d$. For our choice of $t=8$, this is at most $\frac{7}{8}d^2 + 2dL_G$.

\subsection{Decoding algorithm}\label{sec:decoding}
To recover the instance ${\rm I}$ from the encoding it is enough to recover the vectors $Y$, since the set of points $O$ and $E$ is the same in each ${\rm I}$. We first recover the set $Y\setminus \mathcal{Y}^G$ in a straightforward way from its naive encoding. 

To recover a point $y_{S} \in \mathcal{Y}^G$ we need to know for each $e_i\in E$ whether $i \in S$. 
An important implication of Corollary \ref{corr:main} is that given $g(f(e_i))$ and $g(f(y_{S}))$ of any pair $(e_i, y_{S}) \in \mathcal{Y}^G \times \mathcal{E}_{y_{S}}^G$, we can decide whether $i \in S$ by computing $\inp{g(f(e_i))}{g(f(y_{S}))}$. Recall that if $i \not \in S$ then $\inp{e_i}{y_{S}}=0$, and if $i \in S$ then $\inp{e_i}{y_{S}}\geq (1/2)\gamma\epsilon$. Therefore, by setting $\gamma = 48001$ we have that if $\inp{g(f(e_i))}{g(f(y_{S}))} \leq 12000\epsilon$, then $i\not \in S$, and $i\in S$ otherwise. We can recover each $g(f(y_{S}))$ for $y_{S} \in \mathcal{Y}^G$ from its binary representation. Next, we recover the set of indices of the points in $E\setminus \mathcal{E}^G_{y_{S}}$, from which we deduce the set of indices of the points $e_i \in \mathcal{E}^G_{y_{S}}$. This gives the information about the set $\{g(f(e_i))\}_{e_i \in \mathcal{E}^G_{y_{S}}}$. At this stage we have all the necessary information to compute the inner products $\inp{g(f(y_{S}))}{g(f(e_i))}$ for all the pairs $y_{S}$ and $e_i$ that enable us to correctly decide whether $i \in S$. Finally, for the rest points $e\in E\setminus \mathcal{E}^G_{y_{S}}$ we have a naive encoding which explicitly states whether $e$ is a part of $y_{S}$. 

\subsection{Deducing the lower bound} 
From the counting argument, the maximal number of different sets that can be recovered from the encoding of length at most $\rho=\frac{7}{8}d^2 +2dL_G$ is at most $2^\rho$. This implies $\frac{7}{8}d^2 +2dL_G \geq \log \abs{\mathcal{P}}$. On the other hand, the size of the family is $\abs{\mathcal{P}}={\binom{d}{l}}^d$. Recall that we have set $d=2l$ so we have that $\abs{\mathcal{P}} \geq {\binom{2l}{l}}^{2l} \geq \left(2^{(2l-1)}/\sqrt{l}\right)^{2l}\geq 2^{4l^2-2l\log l} \geq 2^{3.9 l^2}$, where the last estimate follows from our assumption on $\epsilon$. 
Therefore,  
$\frac{7}{2}l^2+4lL_G \geq 3.9l^2$, implying $L_G \geq (1/10)l$,
where $L_G = k\log(40/(\epsilon \sqrt{k})) = \frac{1}{2}k\log\left( 16(\frac{10}{\epsilon})^2\frac{1}{k}\right) $. This implies that 
$k \log\left( 16\left(\frac{10}{\epsilon}\right)^2\frac{1}{k}\right) \geq (1/5)l \geq 1/(5 \gamma^2 \cdot \epsilon^2)$.
Setting $x= k\cdot (5 \gamma^2 \cdot \epsilon^2)$ we have that
\[1 \leq x \log\left( \frac{0.5}{x} \cdot 2^{14}\gamma^2 \right) = x\log(0.5/x) + x\log\left(2^{14}\gamma^2 \right) \leq 1/2 + 2x(7+\log \gamma),\]
where the last inequality we have used $x\log(0.5/x) \leq 0.5/(e\ln 2) < 1/2$ for all $x$. This implies the desired lower bound on the dimension:
$k \geq 1/(20 \gamma^2 (7+\log \gamma) \cdot \epsilon^2)$.

\section{Lower bounds for \texorpdfstring{$q$}{q}-moments of distortion}

In this section we prove Theorem \ref{thm:lqdist} which provides a lower bound for $q$-moments of distortion. Similarly, to the proof for $\ell_1$-distortion in Section~\ref{sec:lower_bound}, we prove the theorem first for metric space of fixed size $\hat{n} = O(1/\epsilon^2)\cdot e^{O(\epsilon q)}$, which can be extended for metric spaces of size $\Theta(n)$ for any $n$, by the following variation of lemma proved in \cite{BFN19}:

\begin{lemma}
Let $(X,d_X)$ be a metric space of size $\abs{X}=n>1$, and let $(Y, d_Y)$ be a metric space. Assume that for any embedding $f:X \to Y$ it holds that $\lqdist(f)\geq 1+\epsilon$. For any ${n}\geq \hat{n}$ there is a metric space $Z$ of size $\abs{Z}=\Theta({n})$ such that any embedding $F:Z \to Y$ has $\lqdist(F) \geq 1+\epsilon/2$. Moreover, if $X$ is a Euclidean subset then there is an embedding from $Z$ into a finite dimensional Euclidean space with distortion $1+\delta$ for any $\delta>0$.
\end{lemma}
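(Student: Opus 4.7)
The natural approach is a metric blow-up: replace each point $x \in X$ by a tight cluster $C_x$ of $m$ copies, and then show that any low-distortion embedding of the resulting space $Z$ forces a low-distortion embedding of $X$. Concretely, introduce the projection $\pi:Z\to X$, and set $d_Z(u,v)=\eta$ for distinct $u,v$ in the same cluster (with $\eta>0$ a tiny parameter) and $d_Z(u,v)=d_X(\pi(u),\pi(v))$ whenever $\pi(u)\ne\pi(v)$. A short case analysis shows $d_Z$ is a metric whenever $\eta$ is less than twice the minimum positive distance in $X$; choosing $m=\lceil n/\abs{X}\rceil$ makes $\abs{Z}=\Theta(n)$.

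The heart of the argument is a randomized representative selection. Suppose for contradiction that some embedding $F:Z\to Y$ has $\lqdist(F)<1+\epsilon/2$. Pick representatives $r_x\in C_x$ independently and uniformly, and set $f(x)=F(r_x)$. Because $d_Z(r_x,r_y)=d_X(x,y)$ for $x\ne y$, we have $dist_f(x,y)=dist_F(r_x,r_y)$, so linearity of expectation gives
\[
\mathbb{E}_r\!\left[\binom{\abs{X}}{2}\lqdist(f)^q\right]\;=\;\frac{1}{m^2}\sum_{\pi(u)\ne\pi(v)}dist_F(u,v)^q\;\le\;\frac{\binom{\abs{Z}}{2}}{m^2}(1+\epsilon/2)^q,
\]
which simplifies to $\mathbb{E}_r[\lqdist(f)^q]\le\frac{m\abs{X}-1}{m(\abs{X}-1)}(1+\epsilon/2)^q\le\bigl(1+\tfrac{1}{\abs{X}-1}\bigr)(1+\epsilon/2)^q$. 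But by hypothesis every realization $f$ is an embedding of $X$ into $Y$ and hence satisfies $\lqdist(f)\ge 1+\epsilon$, so the expectation is also at least $(1+\epsilon)^q$. Using $(1+\epsilon)/(1+\epsilon/2)\ge 1+\epsilon/3$ and Bernoulli's inequality, the two bounds are incompatible whenever $\abs{X}-1>3/(\epsilon q)$, which is comfortably implied by $\abs{X}=\hat{n}\gtrsim 1/\epsilon^2$ in the applications. This contradiction proves $\lqdist(F)\ge 1+\epsilon/2$ for every $F$.

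For the Euclidean ``moreover'' clause, if $X\subset\mathbb{R}^d$ we realize $Z$ in $\mathbb{R}^{d+m}$ by placing the $i$-th point of each cluster $C_x$ at $(x,\eta' e_i)$, where $e_1,\ldots,e_m$ are the standard basis vectors of the appended coordinates and $\eta'=\eta/\sqrt{2}$. Intra-cluster distances then match $\eta$ exactly, while inter-cluster distances become $\sqrt{d_X(x,y)^2+2\eta'^2}$ in place of $d_X(x,y)$; taking $\eta'$ small relative to the minimum positive distance of $X$ keeps the multiplicative distortion below $1+\delta$ for any prescribed $\delta>0$. The only delicate part of the whole proof is beating the averaging overhead $\frac{m\abs{X}-1}{m(\abs{X}-1)}$, which tends to $\abs{X}/(\abs{X}-1)$ as $m$ grows, by the gap $((1+\epsilon)/(1+\epsilon/2))^q\ge 1+\Omega(\epsilon q)$; this tradeoff is what forces the threshold $\hat{n}=\Omega(1/(\epsilon q))$, comfortably within the regime already required by Theorems~\ref{thm:average} and~\ref{thm:lqdist}.
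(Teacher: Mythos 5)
Your construction and the core averaging step are essentially the paper's: the paper builds $Z$ as a metric composition $X_\beta[T]$ (Appendix~\ref{app:arbit_size}) --- your clusters at mutual distance $\eta$ are exactly the case where $T$ is equilateral --- and your expectation over random representatives is identical to the paper's average over all $m^{\hat n}$ sections $\bar X$, since each inter-cluster pair is selected with probability $1/m^2$ either way. The one substantive difference is the treatment of the intra-cluster pairs, and it costs you something. You discard them and argue by contradiction, so you must beat the averaging overhead $\frac{m\abs{X}-1}{m(\abs{X}-1)}\approx 1+\frac{1}{\abs{X}-1}$ with the gap $\left((1+\epsilon)/(1+\epsilon/2)\right)^q\ge 1+q\epsilon/3$; this proves the lemma only under the extra hypothesis $\abs{X}-1>3/(q\epsilon)$, which is not in the statement (your argument gives nothing for, say, $\abs{X}=2$, $q=1$ and small $\epsilon$), even though it is harmless for the paper's applications, as you note. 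The restriction is avoidable: since $dist_F(u,v)\ge 1$ for \emph{every} pair, keep the intra-cluster pairs and write
\[
\left(\lqdist(F)\right)^q \;\ge\; \frac{\abs{B}}{\binom{\abs{Z}}{2}}\,(1+\epsilon)^q \;+\;\left(1-\frac{\abs{B}}{\binom{\abs{Z}}{2}}\right),
\]
where $B$ is the set of inter-cluster pairs, $\frac{\abs{B}}{\binom{\abs{Z}}{2}}=\frac{m(\abs{X}-1)}{m\abs{X}-1}\ge 1-\frac{1}{\abs{X}}\ge \frac12$, and the first term comes from your own computation $\frac{1}{\abs{B}}\sum_{B}dist_F^q=\mathbb{E}_r[\lqdist(f)^q]\ge(1+\epsilon)^q$. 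The elementary inequality $1+\frac12\bigl((1+\epsilon)^q-1\bigr)\ge(1+\epsilon/2)^q$ (the difference vanishes at $\epsilon=0$ and its derivative in $\epsilon$ is nonnegative for $q\ge1$) then yields $\lqdist(F)\ge 1+\epsilon/2$ for every $\abs{X}\ge 2$. This is the exact analogue of how the paper's Energy proof exploits $\abs{\expans_F(z_i,z_j)-1}\ge 0$ on intra-cluster pairs to lose only a factor $2$. Your explicit Euclidean realization of $Z$ in $\mathbb{R}^{d+m}$ is correct and substitutes for the paper's appeal to Proposition~2.12 of \cite{BLMN05}.
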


For simplicity we may assume w.l.o.g that $q \geq \frac{3}{\epsilon}$, otherwise the theorem follows from Theorem~\ref{thm:average} by monotonicity of the $\ell_q$-distortion.
The proof strategy has exactly the same structure as in the proof of Section~\ref{sec:lower_bound}, however the sets $I$ are constructed using different parameters.  
For a given $\epsilon <0$, let $l=\lceil\frac{1}{\gamma^2 \epsilon^2} \rceil$ be an integer for some large constant $\gamma > 1$ to be determined later. We construct point sets ${\rm I} \subset \ell_2^d$, where $d=l \tau$, $\tau = e^{\epsilon q}$, and $|{\rm I}| = 3d$. Assume that for all ${\rm I} \in \mathcal{P}$ there is an embedding $f: I \to \ell_2^k$, with $\lqdist(f)\leq 1+\epsilon$. We show that this implies that $k=\Omega(q/\epsilon)$.

As before the strategy is to produce a unique binary encoding of ${\rm I}$ of length ${\rm length(I)}$. We will obtain that $\abs{\mathcal{P}} = \binom{d}{l}^d \geq (d/l)^{ld}$, which will give that ${\rm length(I)} \geq dl\log(d/l) = dl \log(\tau)$. We will show that this implies the bound on $k\geq\Omega(l \log(\tau)) = \Omega(1/\epsilon^2 \cdot \epsilon q) = \Omega(q/\epsilon)$. 
 

As in the proof of Theorem \ref{thm:average}, we can assume w.l.o.g. that $\epsilon \leq 1/\gamma$, which by the choice of $\gamma$ later on implies $\epsilon<1/36$.

\begin{lemma}\label{lemma:lqdist}
For any $I \in \mathcal{P}$ let $f:I \to \ell_2^k$ be an embedding with $\lqdist(f) \leq 1+\epsilon$, for $\epsilon<1/36$. 
There is a subset $\hat{I} \subseteq I$ of size $\abs[\big]{\hat{I}}\geq (1-3/\tau^4)\abs{I}$ such that $f(\hat{I}) \subset B_2\left( 1+6.02\epsilon \right)$, and for $1-2/\tau^4$ fraction of the pairs $(u,v) \in {\binom{\hat{I}}{2}}$ it holds that
$ \abs[\big]{\inp{f(u)}{f(v)}-\inp{u}{v}} \leq 32\epsilon$. 
\end{lemma}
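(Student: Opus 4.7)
The plan mirrors Lemma~\ref{lemma:average} but replaces its Markov averaging on $\energy_1$ with a tail bound that exploits the full $q$-th moment hypothesis. Specifically, from $\frac{1}{\binom{|I|}{2}}\sum_{u\neq v}(dist_f(u,v))^q \leq (1+\epsilon)^q$ and $dist_f \geq 1$, Markov's inequality applied to the random variable $(dist_f(u,v))^q$ under the uniform distribution on pairs yields, for any $M \geq 1$, that the fraction of pairs with $dist_f(u,v) > M$ is at most $\bigl((1+\epsilon)/M\bigr)^q$. Setting $M = 1+6\epsilon$ and using $\epsilon \leq 1/36$ together with $\ln(1+6\epsilon) \geq 6\epsilon - 36\epsilon^2 \geq 5\epsilon$, this evaluates to $\bigl((1+\epsilon)/(1+6\epsilon)\bigr)^q \leq e^{-4\epsilon q} = \tau^{-4}$. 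So at most a $\tau^{-4}$ fraction of pairs in $\binom{I}{2}$ are \emph{bad}, in the sense that $dist_f(u,v) > 1+6\epsilon$.

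Next, I mimic the averaging step of Lemma~\ref{lemma:average} over the $d$ near-zero points in $O$: summing the number of bad-pair incidences over $o_j \in O$ overcounts the global bad set by at most a factor of two, so some $\hat{o} \in O$ has at most a $3\tau^{-4}$ fraction of $v \in I$ with $(\hat{o},v)$ bad. Define $\hat{I}$ to consist of $\hat{o}$ together with every such non-bad partner; this gives $|\hat{I}| \geq (1-3\tau^{-4})|I|$. Translating so that $f(\hat{o}) = 0$, for every $u \in \hat{I}$ we obtain $\|f(u)\|_2 = {\expans}_f(u,\hat{o})\cdot\|u-\hat{o}\|_2 \leq (1+6\epsilon)(1+\epsilon/100) \leq 1+6.02\epsilon$, whence $f(\hat{I}) \subseteq B_2(1+6.02\epsilon)$.

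For the inner-product claim, call a pair $(u,v) \in \binom{\hat{I}}{2}$ \emph{good} if additionally $dist_f(u,v) \leq 1+6\epsilon$, which forces $|{\expans}_f(u,v)-1| \leq 6\epsilon$ via $dist_f \geq \max\{{\expans}_f, 1/{\expans}_f\}$. For good pairs I would reuse the decomposition of Lemma~\ref{lemma:average},
\[|\inp{f(u)}{f(v)}-\inp{u}{v}| \leq \tfrac{1}{2}\bigl|\|f(u)\|_2^2-\|u\|_2^2\bigr| + \tfrac{1}{2}\bigl|\|f(v)\|_2^2-\|v\|_2^2\bigr| + \tfrac{1}{2}\bigl|\|f(u)-f(v)\|_2^2-\|u-v\|_2^2\bigr|,\]
and bound each term by the same computation done there, this time with a single uniform expansion bound $|{\expans}_f(\cdot,\cdot)-1| \leq 6\epsilon$ (instead of an average) and the norms controlled by $1+6.02\epsilon$ and $\sqrt{2} = \mathrm{diam}(I)$, yielding a total of at most $32\epsilon$ with some slack under $\epsilon \leq 1/36$. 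Finally, the fraction of non-good pairs in $\binom{\hat{I}}{2}$ is at most $\tau^{-4}\cdot\binom{|I|}{2}/\binom{\hat{I}}{2} \leq 2\tau^{-4}$, using $|\hat{I}|/|I| \geq 1-3\tau^{-4}$ and $\tau \geq e^3$ (from the standing assumption $q \geq 3/\epsilon$). The main non-routine step is the tail bound in the first paragraph; the constants $6$, $5$, and $4$ are tuned so that the exponent in $\tau^{-4}$ and the radius $1+6.02\epsilon$ emerge exactly as stated, and everything else is a direct transposition of the $\energy_1$ argument with ``average over pairs'' replaced by ``uniform over good pairs''.
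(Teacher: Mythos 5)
Your proof is correct and follows essentially the same route as the paper's: Markov's inequality on the $q$-th moment to show all but a $\tau^{-4}$ fraction of pairs have distortion at most $1+6\epsilon$, an averaging/pigeonhole step over the near-zero points of $O$ to select $\hat{o}$ and $\hat{I}$, and then the inner-product decomposition from Lemma~\ref{lemma:average} applied with the uniform bound $\abs{\expans_f(\cdot,\cdot)-1}\leq 6\epsilon$ on good pairs. The only cosmetic difference is that you set the Markov threshold directly at $M=1+6\epsilon$ and bound $((1+\epsilon)/(1+6\epsilon))^q\leq e^{-4\epsilon q}$, whereas the paper thresholds $(\distort_f)^q$ at $\tau^4(1+\epsilon)^q$; these yield the same conclusion.
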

\begin{proof}
By assumption we have $\left(\lqdist(f)\right)^q = \frac{1}{\abs[\big]{{\binom{I}{2}}}} \sum_{(u,v) \in {\binom{I}{2}}} {\left( {\distort}_{f}(u,v) \right)^q } \leq (1+\epsilon)^q$.

By the Markov inequality there are at least $1-1/\tau^4$ fraction of the pairs $(u,v) \in \abs[\big]{{\binom{I}{2}}}$ such that  $(\distort_{f}(u,v))^q \leq \tau^4 (1+\epsilon)^q \leq (1+\epsilon)^{q}\cdot e^{4\epsilon q}$, implying that $\distort_{f}(u,v) \leq 1+6\epsilon$. Therefore,
\[\abs[\big]{{\expans}_{f}(u,v)-1} \leq \max\{{\expans}_{f}(u,v)-1, 1/{\expans}_{f}(u,v)-1 \} = {\distort}_{f}(u,v)-1 \leq 6\epsilon.\] 

For every $o_j \in O$, let $F_j$ be the set of points $v \in I\setminus\{o_j\}$ such that ${\abs[\big]{{\expans}_{f}(o_j,v)-1}} > 6\epsilon$. Then the total number of pairs $(u,v) \in {\binom{I}{2}}$ with the property that ${\abs[\big]{{\expans}_{f}(u,v)-1}} > 6\epsilon$ is at least $\sum_{j=1}^d |F_j|/2$, implying that there must be a point $\hat{o}=o_{j^*} \in O$ such that $|F_{j*}| \leq \frac{1}{\tau^4}\cdot \frac{3d(3d-1)}{d} \leq \frac{3}{\tau^4} (3d-1)$. Define $\hat{I} = I \setminus F_{j*}$ to be the complement of this set, so that $|\hat{I}| \leq (1-\frac{3}{\tau^4})|I|$.
We assume without loss of generality that $f(\hat{o})=0$. 
Let $\hat{O} = O\cap \hat{I}$.
We have that $\abs{{\expans}_f(v,\hat{o})-1}=\abs{\frac{\norm{f(v)}_2}{\norm{v-\hat{o}}_2}-1}\leq 6\epsilon$, and using $\norm{v- \hat{o}}_2 \leq \norm{v}_2 + \norm{\hat{o}}_2 \leq 1+\epsilon/100$, so that
$\norm{f(v)}_2 \leq (1+6\epsilon)(1+\epsilon/100) \leq 1+ 6.02\epsilon$, implying that $f(v) \in B_2\left( 1+6.02\epsilon \right)$.

Denote by $\hat{G}$ the set of pairs $(u,v) \in {\binom{\hat{I}}{2}}$ satisfying that ${\abs[\big]{{\expans}_{f}(u,v)-1}} \leq 6\epsilon$. To bound the fraction of these pairs from below, we can first bound $|\hat{I}| \geq (1-\frac{3}{\tau^4})|I| \geq \frac{5}{2} d$ and $|\hat{I}|-1 \geq 2d$, using that $\tau > 3$ by our assumption on $q$. Therefore, we have that the fraction of pairs $(u,v) \in {\binom{\hat{I}}{2}}\setminus \hat{G}$ is at most $\frac{1}{\tau^4}\cdot \frac{3d(3d-1)}{|\hat{I}|(|\hat{I}|-1)} \leq \frac{1}{\tau^4} \cdot \frac{9}{5} \leq \frac{2}{\tau^4}.$

Finally, to estimate the absolute difference in inner products over the set of pairs $\hat{G}$ we recall some of the estimates from the proof of Section~\ref{sec:lower_bound}. 
For all $(u,v)\in \hat{G}$ we have:
\begin{eqnarray*}
\abs[\big]{\inp{f(u)}{f(v)}-\inp{u}{v}} & \leq & \frac{1}{2} \left[ \abs[\big]{\norm{f(u)}_2^2-\norm{u}_2^2} + \abs[\big]{\norm{f(v)}_2^2-\norm{v}_2^2} \right] \\
& + & \frac{1}{2}\left[\abs[\big]{\norm{f(u)-f(v)}_2^2-\norm{u-v}_2^2} \right] .
\end{eqnarray*}
We can bound each term as follows:
\begin{eqnarray*}
\lefteqn{ \abs[\big]{\norm{f(u)}_2^2-\norm{u}_2^2} = } \\
& = & \abs{\norm{f(u)-f(\hat{o})}_2^2-\norm{u-\hat{o}}_2^2 + \norm{u-\hat{o}}_2^2-\norm{u}_2^2} \\
& \leq & \abs{\norm{f(u)-f(\hat{o})}_2-\norm{u-\hat{o}}_2}\cdot(\norm{f(u)-f(\hat{o})}_2+\norm{u-\hat{o}}_2)\\
 &+& \abs{\norm{u-\hat{o}}_2-\norm{u}_2}\cdot(\norm{u-\hat{o}}_2+\norm{u}_2) \\
& \leq & \norm{u-\hat{o}}_2\abs{{\expans}_f(u,\hat{o})-1}\cdot(\norm{f(u)}_2+\norm{u-\hat{o}}_2) + \norm{\hat{o}}_2\cdot(\norm{u-\hat{o}}_2+\norm{u}_2) \\
& \leq & \left(1+\frac{\epsilon}{100}\right)\abs{{\expans}_f(u,\hat{o})-1}\left(1+6.02\epsilon+1+\frac{\epsilon}{100}\right) + \frac{\epsilon}{100} \cdot \left(2+\frac{\epsilon}{100}\right) \\
& \leq & \left(2+6.06\epsilon\right)\abs{{\expans}_f(u,\hat{o})-1} + \frac{\epsilon}{40}
\leq \left(2+6.06\epsilon\right)\cdot 6\epsilon + \frac{\epsilon}{40} \leq 14\epsilon,
\end{eqnarray*} 
where we have used $\norm{\hat{o}} \leq \epsilon/100$, $\norm{u-\hat{o}}_2 \leq \norm{u}_2+\norm{\hat{o}}_2 \leq 1+\epsilon/100$, the bound on the norms of the embedding within $\hat{I}$, and the property of pairs in $\hat{G}$.
Additionally, we have that 
\begin{eqnarray*}
\lefteqn{ \abs[\big]{\norm{f(u)-f(v)}_2^2-\norm{u-v}_2^2} = } \\
& = & \abs{\norm{f(u)-f(v)}_2-\norm{u-v}_2}\cdot(\norm{f(u)-f(v)}_2+\norm{u-v}_2) \\
& \leq & \norm{u-v}_2 \abs{{\expans}_f(u,v)-1}\cdot(\norm{f(u)}_2+\norm{f(v)}_2 + \norm{u-v}_2) \\ 
& \leq & \sqrt{2} \left(2\left(1+6.02\epsilon\right) + \sqrt{2}\right) \abs{{\expans}_f(u,v)-1} 
\leq 6 \abs{{\expans}_f(u,v)-1} \leq 36\epsilon,
\end{eqnarray*}
since $\norm{u-v}_2 \leq diam(I) = \sqrt{2}$, and the last step follows using the property of pair in $\hat{G}$. We conclude that for all $(u,v) \in \hat{G}$: $\abs{\inp{f(u)}{f(v)}-\inp{u}{v}} 
\leq \frac{1}{2} \left( 2\cdot 14\epsilon + 36\epsilon \right) = 32\epsilon$.
\end{proof}

As before, the goal is to encode the images of the embedding using a sufficiently small number of bits, by rounding them to the points of a grid of the Euclidean ball via the randomized rounding technique of \cite{AK17} as to preserve the inner product gap. The following lemma provides the probability that this procedure fails.

\begin{lemma}\label{lemma:grid-hoeffding}
Let $X\subset \ell_2^k$ such that $X\subset B_2(r)$. For $\delta \leq r/\sqrt{k}$ let $G_\delta \subseteq B_2(r)$ denote the intersection of the $\delta$-grid with $B_2(r)$. There is a mapping $g : X \to G_\delta$ such that for any $\eta \geq 1$, there is a $1-4e^{-\eta^2}$ fraction of the pairs  $(u,v) \in {\binom{X}{2}}$ such that $\abs{\inp{g(u)}{g(v)} - \inp{u}{v}} \leq  3\sqrt{2}\eta \delta r$, and the points of the grid can be represented using $L_{G_\delta} = k\log(4r/(\delta \sqrt{k}))$ bits.
\end{lemma}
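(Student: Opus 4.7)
The plan is to use exactly the same stochastic rounding procedure as in Lemma~\ref{lemma:grid}, but to replace the second-moment bound on a single pair by a Hoeffding-type sub-Gaussian tail and then average over pairs. Concretely, for every $v\in X$ and every coordinate $i$, independently set $g(v)_i=\tilde v_i$ to be $\lceil v_i/\delta\rceil\delta$ with probability $v_i/\delta-\lfloor v_i/\delta\rfloor$ and $\lfloor v_i/\delta\rfloor\delta$ otherwise, so that $\E[\tilde v_i]=v_i$, $|\tilde v_i-v_i|\leq \delta$, and all coordinates across all points are mutually independent.

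Fix a pair $(u,v)\in\binom{X}{2}$ and decompose
\[
\langle \tilde u,\tilde v\rangle-\langle u,v\rangle \;=\; \langle \tilde u-u,\,v\rangle \;+\; \langle \tilde u,\,\tilde v-v\rangle.
\]
The first summand is a sum of independent centered random variables, the $i$-th of range $\delta|v_i|$, so Hoeffding gives
\[
\Pr\!\left[\,|\langle \tilde u-u,v\rangle|>t_1\,\right] \leq 2\exp\!\left(-\tfrac{2t_1^2}{\delta^2\|v\|_2^2}\right) \leq 2\exp\!\left(-\tfrac{2t_1^2}{\delta^2 r^2}\right).
\]
For the second summand I would condition on $\tilde u$: the coordinates $\tilde v_i-v_i$ are independent of $\tilde u$ and of each other, have mean zero, and each $\tilde u_i(\tilde v_i-v_i)$ has range $\delta|\tilde u_i|$, so Hoeffding conditional on $\tilde u$ yields
\[
\Pr\!\left[\,|\langle \tilde u,\tilde v-v\rangle|>t_2\,\big|\,\tilde u\,\right] \leq 2\exp\!\left(-\tfrac{2t_2^2}{\delta^2\|\tilde u\|_2^2}\right) \leq 2\exp\!\left(-\tfrac{t_2^2}{2\delta^2 r^2}\right),
\]
where the last step uses the deterministic bound $\|\tilde u\|_2\leq \|u\|_2+\|\tilde u-u\|_2\leq r+\delta\sqrt{k}\leq 2r$, which is precisely where the hypothesis $\delta\leq r/\sqrt k$ is invoked.

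Choosing $t_1=\eta\delta r/\sqrt 2$ and $t_2=\sqrt 2\,\eta\delta r$ makes each tail at most $2e^{-\eta^2}$, while $t_1+t_2\leq 3\sqrt 2\,\eta\delta r$. A union bound then gives $\Pr[|\langle \tilde u,\tilde v\rangle-\langle u,v\rangle|>3\sqrt 2\,\eta\delta r]\leq 4e^{-\eta^2}$. Averaging this probability over all pairs shows that the expected fraction of pairs failing the bound is at most $4e^{-\eta^2}$, so there exists a deterministic realization of the rounding with at most a $4e^{-\eta^2}$ fraction of bad pairs; take $g$ to be this realization. The bit-length bound $L_{G_\delta}=k\log(4r/(\delta\sqrt k))$ is inherited verbatim from Lemma~\ref{lemma:grid}, since the discretization grid is identical and each coordinate is encoded by a sign and an integer $n_i$ with $\sum n_i^2\leq (r/\delta)^2$.

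The only delicate point, and the one I expect to be the main obstacle, is the cross term $\langle \tilde u,\tilde v-v\rangle$: because $\tilde u$ is itself random and multiplies the fluctuations of $\tilde v$, the unconditioned sum is not a sum of independent variables and Hoeffding does not apply directly. Conditioning on $\tilde u$ restores independence over the coordinates of $\tilde v$, but one must now control the Hoeffding variance parameter $\|\tilde u\|_2$ deterministically, which is exactly what the grid-spacing assumption $\delta\leq r/\sqrt k$ buys.
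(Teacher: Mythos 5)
Your proof is correct and follows essentially the same route as the paper: the identical stochastic rounding, the same decomposition into $\langle\tilde u-u,v\rangle+\langle\tilde u,\tilde v-v\rangle$, Hoeffding tails for each term with the hypothesis $\delta\leq r/\sqrt k$ used to control the range of the cross term, a union bound, and averaging over pairs to extract a deterministic realization. Your explicit conditioning on $\tilde u$ for the cross term is a cleaner justification of a step the paper handles implicitly via the deterministic coordinate bound $|\tilde u_i|\leq|u_i|+\delta$, but the argument is the same.
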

\begin{proof}
For each point $v \in X$ randomly and independently match a point $\tilde{v}$ on the grid by rounding each of its coordinates $v_i$ to one of the closest integral multiplies of $\delta$ in such a way that $E[\tilde{v}_i]=v_i$. 
For any $u \neq v \in X$ we have:
$\abs{\inp{\tilde{u}}{\tilde{v}} - \inp{u}{v}} \leq  \abs{\inp{\tilde{u}-u}{v}} + \abs{\inp{\tilde{u}}{\tilde{v}-v}} $.
Now, $\E[\inp{\tilde{u}-u}{v}] = \sum_{i=1}^{k}{\E[\tilde{u}_i-u_i]v_i}=0$ and 
$\E[\inp{\tilde{u}}{\tilde{v}-v}] = \sum_{i=1}^{k}{\E[\tilde{u}_i]\E[\tilde{v}_i-v_i]}= 0$.
Next, we wish to make use of the Hoeffding bound. We therefore bound each of the terms $|(\tilde{u}_i-u_i)v_i| \leq \delta|v_i|$ and the sum $\sum_{i=1}^k \delta^2 v_i^2 = \delta^2 r$, and $|\tilde{u}_i(\tilde{v}_i-v_i)| \leq \delta(|u_i|+\delta)$, so that 
\[ \sum_{i=1}^k \delta^2 (v_i+\delta)^2 = \delta^2 \sum_{i=1}^k (v_i^2 +2\delta v_i + \delta^2) \leq \delta^2 ( r + 2\delta \norm{v_i}_1 + \delta^2 k) \leq \delta^2 (r^2 + 2\delta r \sqrt{k} + \delta^2 k) \leq 4\delta^2 r^2. \]
Applying the Hoeffding bound we get that 
\[\Pr[ \abs{\inp{\tilde{u}-u}{v}} > \sqrt{2}\eta\delta r ] \leq 2e^{-\eta^2}\] and  $\Pr[ \abs{\inp{\tilde{u}}{\tilde{v}-v}} > 2\sqrt{2}\eta\delta r ] \leq 2e^{-\eta^2}$, and therefore 

\[\Pr[ \abs{\inp{\tilde{u}}{\tilde{v}} - \inp{u}{v}} > 3\sqrt{2}\eta\delta r ] \leq 4e^{-\eta^2}.\] This probability also bounds the expected number of pairs with this property so there must exist an embedding to the grid where the bound stated in the lemma holds.
The bound on the representation size is the same as in Lemma~\ref{lemma:grid}.
\end{proof}

Combining the lemmas we obtain:
\begin{corollary}\label{corr:combine-q}
For any $I \in \mathcal{P}$ let $f:I \to \ell_2^k$ be an embedding with $\lqdist(f) \leq 1+\epsilon$, with $\epsilon \leq 1/36$. There is a subset $\hat{I} \subseteq I$ of size $\abs[\big]{\hat{I}}\geq (1-3/\tau^4)\abs{I}$ 
such that for a fraction of at least $1-6/\tau^4$ of the pairs $(u,v) \in {\binom{\hat{I}}{2}}$ it holds that:
$\abs[\big]{\inp{g(f(u))}{g(f(v))}-\inp{u}{v}} \leq 42\epsilon$,
where $g: \hat{I} \to G$. Moreover, the points in $G$ can be uniquely represented by binary strings of length at most $L_G = k\log(5 \sqrt{q/(\epsilon k)})$ bits.
\end{corollary}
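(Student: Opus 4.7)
The plan is to chain Lemma~\ref{lemma:lqdist} with Lemma~\ref{lemma:grid-hoeffding}, applied to $X = f(\hat{I})$ inside the ball $B_2(r)$ with $r = 1 + 6.02\epsilon$, and to balance the grid spacing $\delta$ so that the length bound $L_G = k\log(5\sqrt{q/(\epsilon k)})$ is achieved while the added inner-product error stays of order $\epsilon$.

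First I would invoke Lemma~\ref{lemma:lqdist} to obtain the subset $\hat{I}\subseteq I$ of size at least $(1-3/\tau^4)|I|$, with $f(\hat{I})\subset B_2(r)$ for $r = 1+6.02\epsilon$, and with a subset $\hat{G}\subseteq\binom{\hat{I}}{2}$ of density at least $1 - 2/\tau^4$ on which $|\langle f(u),f(v)\rangle - \langle u,v\rangle| \le 32\epsilon$. As in the proof of Corollary~\ref{corr:combine}, I may assume $\delta \le r/\sqrt{k}$ since the contrary case gives an even stronger bound directly.

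Next I would apply Lemma~\ref{lemma:grid-hoeffding} to $X = f(\hat{I})$ with grid spacing $\delta := \tfrac{4r}{5}\sqrt{\epsilon/q}$, which is precisely the choice that yields $L_{G_\delta} = k\log(4r/(\delta\sqrt{k})) = k\log(5\sqrt{q/(\epsilon k)})$ matching the stated encoding length. Choosing the parameter $\eta := 2\sqrt{\epsilon q}$ (note $\eta\ge 1$ under our assumption $q\ge 3/\epsilon$), the lemma guarantees that outside of a fraction at most $4e^{-\eta^2} = 4e^{-4\epsilon q} = 4/\tau^4$ of the pairs in $\binom{\hat{I}}{2}$, the rounding satisfies
\[ \bigl|\langle g(f(u)),g(f(v))\rangle - \langle f(u),f(v)\rangle\bigr| \;\le\; 3\sqrt{2}\,\eta\,\delta\, r \;=\; \tfrac{24\sqrt{2}}{5}\,r^2\,\epsilon \;\le\; 10\epsilon, \]
where the last step uses $r = 1+6.02\epsilon \le 1.17$ for $\epsilon \le 1/36$.

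Finally I would combine the two guarantees by a union bound: the set of pairs on which both the Lemma~\ref{lemma:lqdist} bound and the rounding bound hold has density at least $1 - 2/\tau^4 - 4/\tau^4 = 1 - 6/\tau^4$, and on this set the triangle inequality gives $|\langle g(f(u)),g(f(v))\rangle - \langle u,v\rangle| \le 32\epsilon + 10\epsilon = 42\epsilon$, as desired. The main place that requires care is calibrating $\delta$ and $\eta$ simultaneously: $\delta$ must be small enough to make $L_G$ match the target, yet the product $\eta\delta r$ must stay $O(\epsilon)$, and $\eta$ must be large enough that $e^{-\eta^2}$ absorbs into the $1/\tau^4 = e^{-4\epsilon q}$ slack. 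The particular choice $\eta = 2\sqrt{\epsilon q}$ is forced to match the $\tau^{-4}$ scaling that later yields the $\Omega(q/\epsilon)$ lower bound.
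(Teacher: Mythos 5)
Your proof is correct and follows essentially the same route as the paper: apply Lemma~\ref{lemma:lqdist}, then Lemma~\ref{lemma:grid-hoeffding} to $f(\hat{I})$ with $\delta = \Theta(\sqrt{\epsilon/q})$ and $\eta = \Theta(\sqrt{\epsilon q})$, and union-bound the two sets of bad pairs to get $32\epsilon + 10\epsilon = 42\epsilon$ on a $1-6/\tau^4$ fraction. In fact, your calibration $\eta = 2\sqrt{\epsilon q}$ is the one that actually delivers the claimed grid-failure fraction $4e^{-\eta^2} = 4/\tau^4$ (hence $6/\tau^4$ total), whereas the paper takes $\eta = \sqrt{\ln \tau} = \sqrt{\epsilon q}$, which only gives $4e^{-\eta^2} = 4/\tau$ and would not support the stated $6/\tau^4$; so your version is, if anything, more faithful to the corollary's statement.
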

\begin{proof}
The corollary follows by applying Lemma~\ref{lemma:lqdist} followed by Lemma~\ref{lemma:grid-hoeffding} with $X=\hat{I}$ with $\delta = 2\sqrt{\epsilon/q}$ and $\eta = \sqrt{\ln(\tau)}$. Note that we may assume that indeed $2\sqrt{\epsilon/q} = \delta < 1/\sqrt{k} < r/\sqrt{k}$, since otherwise we are done.
Therefore, the increase in the absolute difference of the inner products due to the grid embedding is at most: 
$ 3\sqrt{2}\eta\delta r = 6r\sqrt{2\ln(\tau)\epsilon/q} = 6r\sqrt{2(\epsilon q)\epsilon/q} \leq 10\epsilon$. 
The bound on representation of the grid follows from Lemma \ref{lemma:grid-hoeffding}:
$L_G = k\log(4r/(\delta \sqrt{k})) = k\log(4r \sqrt{q/(\epsilon k)}) \leq k\log(5 \sqrt{q/(\epsilon k)})$.
\end{proof}

We are ready to obtain the main technical consequence which will imply the lower bound:
\begin{corollary}\label{corr:main-q}
For any $I \in \mathcal{P}$ let $f:I \to \ell_2^k$ be an embedding with $\lqdist(f) \leq \epsilon$, with $\epsilon \leq 1/36$. There is a subset of points $G$ 
that satisfies the following: there is a subset ${\mathcal{Y}^G} \subseteq Y$ of size $\abs[\big]{\mathcal{Y}^G}\geq (1-6/\tau^2)\abs{Y}$ such that for each $y \in \mathcal{Y}_G$ there is a subset $\mathcal{E}^G_{y} \subseteq E$ of size $\abs[\big]{\mathcal{E}^G_{y}} \geq (1-6/\tau^2)\abs[\big]{E}$ such that for all pairs $(y,e) \in \mathcal{Y}^G \times \mathcal{E}^G_y$ we have: $\abs[\big]{\inp{g(f(y))}{g(f(e))}-\inp{y}{e}} \leq 42\epsilon$, 
where $g: \mathcal{Y}^G \cup \{\mathcal{E}^G_y\}_{y\in \mathcal{Y^G}} \to G$. Moreover, the points in $G$ can be uniquely represented by binary strings of length at most $L_G = k\log(5 \sqrt{q/(\epsilon k)})$ bits.
\end{corollary}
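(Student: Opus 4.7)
The plan is to closely mirror the argument of Corollary~\ref{corr:main}, but starting from the high-probability statement of Corollary~\ref{corr:combine-q} instead of first applying Markov to an average bound. The binary representation length $L_G$ is inherited directly from Corollary~\ref{corr:combine-q}; only the size guarantees on $\mathcal{Y}^G$ and $\mathcal{E}^G_y$ require new bookkeeping.

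First I would apply Corollary~\ref{corr:combine-q} to obtain the subset $\hat{I}\subseteq I$ with $|I\setminus \hat{I}|\leq (3/\tau^4)|I| = 9d/\tau^4$, together with the discretization $g$ such that at most a $6/\tau^4$ fraction of pairs in $\binom{\hat{I}}{2}$ are \emph{bad}, i.e., satisfy $|\inp{g(f(u))}{g(f(v))}-\inp{u}{v}|>42\epsilon$. In particular, the number of bad pairs lying inside $Y\times E$ is at most $(6/\tau^4)\binom{|\hat{I}|}{2}\leq 27 d^2/\tau^4$.

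Next I would run the same double-counting step as in Corollary~\ref{corr:main}. If $N$ bad pairs of $Y\times E$ are present, then the number of $y\in Y$ participating in more than $T$ bad $E$-pairs is at most $N/T$. Choosing $T=\sqrt{27}\, d/\tau^2$ gives at most $\sqrt{27}\, d/\tau^2$ such offending $y$'s. I then define $\mathcal{Y}^G$ as the set of $y\in Y\cap \hat{I}$ that are not offending, and for each such $y$ let $\mathcal{E}^G_y$ be the set of $e\in E\cap \hat{I}$ forming a good pair with $y$. Combining the two losses from $Y$, namely at most $9d/\tau^4$ points outside $\hat{I}$ and at most $\sqrt{27}\, d/\tau^2$ offending points, and symmetrically for each $\mathcal{E}^G_y$, would give
\[
|\mathcal{Y}^G|\geq \Bigl(1-\tfrac{9}{\tau^4}-\tfrac{\sqrt{27}}{\tau^2}\Bigr)|Y|,\qquad |\mathcal{E}^G_y|\geq \Bigl(1-\tfrac{9}{\tau^4}-\tfrac{\sqrt{27}}{\tau^2}\Bigr)|E|.
\]

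The only subtlety is to ensure that these constants consolidate into the claimed $(1-6/\tau^2)$ bound. Using the simplification $q\geq 3/\epsilon$ assumed at the start of Section~3, we have $\tau=e^{\epsilon q}\geq e^3$, so $9/\tau^4$ is negligible relative to $1/\tau^2$ and the estimate $9/\tau^4+\sqrt{27}/\tau^2\leq 6/\tau^2$ holds comfortably. I do not anticipate any real obstacle here: all the probabilistic and discretization work has been performed in Lemmas~\ref{lemma:lqdist} and \ref{lemma:grid-hoeffding} and Corollary~\ref{corr:combine-q}, and what remains is an analog of the bookkeeping of Corollary~\ref{corr:main}, with the key quantitative change being that the bad-pair fraction is $6/\tau^4$ (from a tail bound) rather than $\beta^2$ (from Markov on the average). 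The main thing to track carefully is that $\sqrt{27}/\tau^2<6/\tau^2$ absorbs the lower-order $9/\tau^4$ term under the standing assumption on $q$.
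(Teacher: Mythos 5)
Your proposal is correct and follows essentially the same route as the paper's own proof: apply Corollary~\ref{corr:combine-q}, bound the number of bad pairs in $Y\times E$ by $27d^2/\tau^4$, and double-count with threshold $3\sqrt{3}\,d/\tau^2=\sqrt{27}\,d/\tau^2$ to bound the offending $y$'s, then fold in the points missing from $\hat{I}$ using $\tau\geq e^3$ to absorb the $O(1/\tau^4)$ terms into $6/\tau^2$. Your bookkeeping of the $9d/\tau^4$ points outside $\hat{I}$ is in fact slightly more careful than the paper's, and the final constants check out.
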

\begin{proof}
Applying Corollary~\ref{corr:combine-q} we have that there are at most $6/\tau^4$ pairs $(u,v) \in {\binom{\hat{I}}{2}}$ such that \linebreak $\abs[\big]{\inp{g(f(u))}{g(f(v))}-\inp{u}{v}} > 42\epsilon$. It follows that the number of pairs in $Y \times E$ that are in ${\binom{\hat{I}}{2}}$ and have this property is at most $\frac{6}{\tau^4} \cdot \frac{3d(3d-1)}{2} \leq \frac{27}{\tau^4} \cdot d^2$. Therefore there can be at most $ \frac{3\sqrt{3}}{\tau^2} \cdot d$ points in $u \in Y$ such that there are more than $\frac{3\sqrt{3}}{\tau^2} d$ points in $v \in E$ with this property. Since there at most $\frac{3}{\tau^4} \cdot d < \frac{0.5}{\tau^2} \cdot d$ points in each of $Y$ and $E$ which may not be in $\hat{I}$ we obtain the stated bounds on the sizes of $\abs{\mathcal{Y}^G}$ and $\abs{\mathcal{E}^G_{y}}$.
\end{proof}

As before we proceed to show that the whole instance ${\rm I}$ can be encoded using sufficiently small number of bits. 

\subsection{Encoding and decoding}
For a set ${\rm I} \in \mathcal{P}$ let $f:{\rm I} \to \ell_2^k$ be an embedding with $\lqdist(f)=1+\epsilon$, where $\Omega\left( \frac{1}{\sqrt{n}}\right)\leq\epsilon<1/36$, and $q = O(\log(\epsilon^2 n)/\epsilon)$. 
Let $t = \tau^2/6$. Therefore, by Corollary \ref{corr:main-q}, we can find a subset $G \subseteq B_2(2)$, and a mapping $g: f(I) \to G$, and a subset $\mathcal{Y}^G \subseteq Y$, with $\abs{\mathcal{Y}^{G}}\geq \left(1-\frac{1}{t} \right)\abs{Y}$, where for all $y \in \mathcal{Y}^G$ we can find a susbet $\mathcal{E}_y^G \subseteq E$ with $\abs{\mathcal{E}_y^{G}} \geq \left(1-\frac{1}{t} \right)\abs{E}$, such that for all pairs $(e, y) \in \mathcal{Y}^G \times \mathcal{E}_y^G$ the inner products $ \abs[\big]{\inp{g(f(y))}{g(f(e))}-\inp{y}{e}}\leq 42 \epsilon$. Moreover, each point in $G$ can be uniquely encoded using at most $L_G = k\log(5 \sqrt{q/(\epsilon k)})$ bits.  

The encoding is done according to the description in Section~\ref{sec:encoding} so we similarly obtain the following bound on the bit length of the encoding: $(1/t) d^2 (2+\log(e t)) +2dL_G$.

The decoding works in the same way as before for an appropriate choice of $\gamma = 169$. 

\subsection{Deducing the lower bound} In this subsection we show that $k=\Omega(q/\epsilon)$, proving the desired lower bound for the case of $n=3d = O(1/\epsilon^2) \cdot e^{O(\epsilon q)}$. 
From the counting argument, the maximal number of different sets that can be recovered from the encoding of length at most $\rho=(1/t) d^2 (2+\log(e t)) + 2dL_G$ is at most $2^\rho$. This implies 
\[(1/t) d^2 (2+\log(e t))  +2dL_G \geq \log \abs{\mathcal{P}}.\] On the other hand, the size of the family is $\abs{\mathcal{P}}={\binom{d}{l}}^d \geq (d/l)^{ld} = \tau^{ld}$, so that $\log(\abs{\mathcal{P}}) = ld \log(\tau)$.
We therefore derive the following inequality 
\[(1/t) d^2 (2+\log(e t)) +2dL_G \geq ld \log(\tau) \Rightarrow L_G \geq (1/4)l \log(\tau),\]
as 
\[(1/t)d (2+\log(e t)) \leq d (2\log(\tau)+4)/\tau^2 \leq d/(2\tau)\log(\tau) = l\log(\tau)/2,\] using that $\log(\tau) > 4$. 

Recall that $L_G = k\log(5 \sqrt{q/(\epsilon k)}) = \frac{1}{2}k\log \left( 25(q/(\epsilon k)) \right) $. 
This implies that 
\[k \log\left( 25\left(\frac{q}{\epsilon k}\right) \right) \geq (1/2)l\log(\tau) \geq 1/(2 \gamma^2 \cdot \epsilon^2) \cdot \epsilon q = 1/(2 \gamma^2) \cdot q/\epsilon .\]
Setting $x= k\cdot (2 \gamma^2 \cdot \epsilon / q)$ we have that
\[1 \leq x \log\left( \frac{0.5}{x} \cdot 100\gamma^2 \right) = x\log(0.5/x) + x\log\left(100\gamma^2 \right) \leq 1/2 + 2x\log(10\gamma),\]
where the last inequality we have used $x\log(0.5/x) \leq 0.5/(e\ln 2) < 1/2$ for all $x$. This implies the desired lower bound on the dimension:
$k \geq 1/(8 \gamma^2 \log(10\gamma)) \cdot q/\epsilon$.

\bibliographystyle{plainurl}
\bibliography{JL_average_LB}

\appendix

\section{Metric spaces of an arbitrary size}\label{app:arbit_size}
In order to extend the lower bound for the input metrics of an arbitrary size $n \geq \hat{n}=\Theta(1/\epsilon^2)$, we use the notion of the metric composition proposed in \cite{BLMN05}. Given a metric space $X$, we compose it with another metric space $Y$ of size ${n}/\abs{X}$ by substituting each point in $X$ with a copy of $Y$. The first observation is that in such composition pairs of the points from different copies constitute a constant fraction of all the points in the space. The second observation is that, loosely speaking, the average error over these pairs is, up to a constant, the average of average errors over different ''copies'' of $X$ in the composition.  

The following lemma is a variant of a lemma that appeared in \cite{BFN19}:

\begin{lemma}\label{lem:metric_comp}

Let $(X,d_X)$ be a metric space of size $\abs{X}=\hat{n}>1$, and let $(Y, d_Y)$ be a metric space. Assume that $\alpha>0$ is such that for any embedding $f:X \to Y$ it holds that $\energy_q(f)\geq \alpha$. For any ${n}\geq \hat{n}$ there is a metric space $Z$ of size $\abs{Z}=\Theta({n})$ such that any embedding $F:Z \to Y$ has $\energy_q(F) \geq \alpha/2$. 

Moreover, if $X$ is a Euclidean subset then there is an embedding from $Z$ into a finite dimensional Euclidean space with distortion $1+\delta$ for any $\delta>0$.

\end{lemma}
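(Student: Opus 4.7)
My plan is to construct $Z$ as a metric composition of $X$ with $m = \lceil n/\hat n \rceil$ tight copies of each point, so that $|Z| = m\hat n = \Theta(n)$. Explicitly, I replace each $x_i \in X$ with a cluster $C_i = \{z_{i,1},\ldots,z_{i,m}\}$, set $d_Z(z_{i,a}, z_{j,b}) = d_X(x_i, x_j)$ for $i \neq j$ and any $a,b$, and set $d_Z(z_{i,a}, z_{i,b}) = \eta$ for $a \neq b$, with $\eta > 0$ chosen smaller than $2\min_{i\neq j}d_X(x_i, x_j)$ so that $d_Z$ satisfies the triangle inequality.

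The heart of the lower bound is a double counting over \emph{representative} embeddings. Given $F : Z \to Y$ and $\sigma \in [m]^{\hat n}$, define $f_\sigma : X \to Y$ by $f_\sigma(x_i) = F(z_{i,\sigma_i})$. Since inter-cluster $d_Z$ agrees with the corresponding $d_X$, we have ${\expans}_{f_\sigma}(x_i, x_j) = {\expans}_F(z_{i,\sigma_i}, z_{j,\sigma_j})$, and by hypothesis
\[
\sum_{i<j}\abs[\big]{{\expans}_F(z_{i,\sigma_i}, z_{j,\sigma_j}) - 1}^q \;\geq\; \binom{\hat n}{2}\alpha^q .
\]
Summing over all $m^{\hat n}$ choices of $\sigma$, and observing that each inter-cluster unordered pair $\{z_{i,a}, z_{j,b}\}$ (with $i<j$) is counted exactly $m^{\hat n - 2}$ times, I obtain
\[
\sum_{i<j}\sum_{a,b \in [m]}\abs[\big]{{\expans}_F(z_{i,a}, z_{j,b}) - 1}^q \;\geq\; m^2 \binom{\hat n}{2}\alpha^q .
\]
This left-hand side is a subsum of $\binom{|Z|}{2}\cdot \energy_q(F)^q$, so using the elementary bound $m^2\binom{\hat n}{2}/\binom{m\hat n}{2} = m(\hat n-1)/(m\hat n-1) \geq (\hat n-1)/\hat n \geq 1/2$ (as $\hat n \geq 2$), I conclude $\energy_q(F)^q \geq \alpha^q/2$, and hence $\energy_q(F) \geq \alpha/2^{1/q} \geq \alpha/2$ since $q \geq 1$.

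For the moreover part, assuming $X \subset \mathbb{R}^d$, I realize $Z$ as a subset of $\mathbb{R}^{d + m\hat n}$ by placing $z_{i,a} = (x_i,\, (\eta/\sqrt 2)\, e_{i,a})$, where the $e_{i,a}$ form an orthonormal set in the additional coordinates. The Euclidean intra-cluster distance is then exactly $\eta$, matching $d_Z$, while the inter-cluster distance becomes $\sqrt{d_X(x_i, x_j)^2 + \eta^2}$. Hence the identity-like embedding of $Z$ into $\mathbb{R}^{d+m\hat n}$ has distortion at most $\sqrt{1 + \eta^2/\min_{i\neq j}d_X(x_i,x_j)^2}$, which can be made at most $1+\delta$ for any $\delta > 0$ by taking $\eta$ sufficiently small. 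The main subtlety is keeping the abstract intra-cluster distance and its Euclidean realization in agreement; this is what dictates the $\eta/\sqrt 2$ scaling of the perturbation coordinates and the admissible range of $\eta$, which in turn must be small enough not to blow up the inter-cluster distortion.
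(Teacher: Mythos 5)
Your proof is correct and follows essentially the same route as the paper's: you form the metric composition of $X$ with an equilateral cluster of $m=\lceil n/\hat n\rceil$ points, double-count over all $m^{\hat n}$ representative copies of $X$ (each isometric to $X$, so contributing $\binom{\hat n}{2}\alpha^q$), and observe that the inter-cluster pairs make up at least half of $\binom{|Z|}{2}$, which is exactly the paper's argument. The only difference is cosmetic: for the ``moreover'' part you give an explicit Euclidean realization by adding orthogonal perturbation coordinates of length $\eta/\sqrt2$, whereas the paper invokes Proposition 2.12 of \cite{BLMN05}; both yield distortion $1+\delta$ by taking the intra-cluster scale small.
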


We prove the lemma here for completeness. 
We start with definition of the composition of metric spaces given in \cite{BLMN05}:

\begin{definition}
Let $(S,d_S)$, $(T, d_T)$ be finite metric spaces. For any $\beta \geq 1/2$, the $\beta$-composition of $S$ with $T$, denoted by $Z=S_{\beta}[T]$, is a metric space of size $|Z|=|S|\cdot |T|$ constructed in the following way. Each point $u \in X$ is substituted with a copy of the metric space $T$, denoted by $T^{(u)}$. Let $u, v \in S$, and $z_i \neq z_j \in Z$, such that $z_i \in T^{(u)}$, and $z_j\in T^{(v)}$, then

\[ d_Z(z_i, z_j)=\begin{cases}
     \frac{1}{\gamma}\frac{1}{\beta} \cdot d_T(z_i, z_j), & u=v \\
        d_S(u, v), & u\neq v \\

   \end{cases}
\]
where $\gamma=\frac{\max_{t \neq t' \in T} \{d_T(t, t') \}}{ \min_{s \neq s' \in s}\{ d_S(s, s')\}}$.
\end{definition}

\begin{proof}
Given any ${n} \geq \hat{n}$ let $m=\lceil \frac{n}{\hat{n}} \rceil$, and let $T$ be any $m$-point metric space. For any $\beta \geq 1/2$ let $Z$ be the $\beta$-metric composition of $X$ with $T$ (note that the choice of $T$ is arbitrary), and let $N = |Z| = \hat{n} m=\Theta({n})$. Let $F:Z \rightarrow Y$ be any embedding, and consider the set of pairs $B \subseteq {\binom{Z}{2}}$, $B=\{ (z_i, z_j)| z_i \in T^{(u)},  z_j \in T^{(v)}, \forall u\neq v \in X\}$. Then, $|B|=m^2 \cdot {\binom{\hat{n}}{2}}$. Let $q\geq 1$, and note that for all $z_i\neq z_j \in Z$ it holds that $\abs{{\expans}_F(z_i, z_j)-1} \geq 0$. Then 

\[
{({\energy}_q(F))}^q \geq \frac{1}{\binom{N}{2}}\sum\limits_{z_i \neq z_j \in B}{\abs{{\\expans}_{F}(z_i,z_j)-1}}^q \geq \frac{1}{2}\cdot\frac{1}{m^2\binom{\hat{n}}{2}}\sum\limits_{z_i \neq z_j \in B}{\abs{{\\expans}_{F}(z_i,z_j)-1}}^q. \]

Let $\mathcal{X}$ denote the family of all possible $n$-point subsets ${\bar X} \subset Z$, where each point of ${\bar X}$ is chosen from exactly one of the copies $T^{(x_1)}, T^{(x_2)}, \ldots T^{(x_{\hat n})}$. Namely, each ${\bar X}$ is a metric space isometric to $X$. Let $F|_{{\bar X}}$ denote the embedding $F$ restricted to the points of ${\bar X}$. The size of the family $|\mathcal{X}|=m^{\hat{n}}$, and it holds that

\begin{align}\frac{1}{m^{\hat{n}}}\sum\limits_{{\bar X} \in \mathcal{X}}{({{\energy}_q}(F|_{{\bar X}})}^q & = \frac{1}{m^{\hat{n}}}\sum\limits_{{\bar X} \in \mathcal{X}}\frac{1}{\binom{\hat{n}}{2}}\sum\limits_{u,v \in {\bar X}} {\abs{{\expans}_{F}(u,v)-1}}^q \notag \\  & =
\frac{1}{m^{\hat{n}}}\frac{1}{\binom{\hat{n}}{2}} \sum\limits_{z_i\neq z_j \in B} m^{\hat{n}-2}\cdot{\abs{{\expans}_{F}(u,v)-1}}^q \notag \\ \notag & = \frac{1}{\binom{\hat{n}}{2}m^2}\cdot \sum\limits_{z_i\neq z_j \in B}{\abs{{\expans}_{F}(u,v)-1}}^q. \notag \end{align}
%
By the assumption it holds that ${{{\energy}_q}(F|_{{\bar X}})}^q \geq \alpha^q$, implying that $\energy_q(F) \geq \alpha$.

Note that the bound on Energy does not depend on the value of $\beta$. It was shown in \cite{BLMN05} (Proposition $2.12$) that if $X$ and $T$ are both Euclidean subsets, then their composition $Z=X_{\beta}[T]$ embeds into a finite dimensional Euclidean space with distortion $(1+\epsilon)$, for $\beta=O\left(1/\epsilon \right)$. This completes the proof of the lemma.
\end{proof}

The lemma implies that in order to obtain a family of metric spaces of any size $\Theta(n)$ it is enough to compose the metric spaces ${\rm }I$ in the family $\mathcal{P}$, of size $\abs{I}=6l=\hat{n}$ with, for example, an equilateral metric space on $\lceil n/6l \rceil$ points.

\section{More additive distortion measures}\label{app:additive_meas}
In this section we prove Theorem \ref{thm:additive} for the additive distortion measures. We will use some of the observations about the basic relations between the measures made in \cite{BFN19}:




\begin{claim}
For an embedding $f:X \to Y$, for any $r\geq 1$, there is an embedding $f':X \to Y$ such that $\sigma_{1,r}(f) = \energy_1(f')$.
\end{claim}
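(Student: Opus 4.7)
The plan is to obtain $f'$ from $f$ by a global rescaling. Specifically, I would define
\[ f'(x) = \frac{1}{\lexpans{r}(f)} \cdot f(x), \]
interpreting this either as scaling within the normed target (as in the paper's $\ell_2^k$ setting) or, more generally, as replacing the target metric $d_Y$ by the rescaled metric $d_Y/\lexpans{r}(f)$, which is again a metric. Either way the underlying map is structurally the same as $f$, only distances are uniformly shrunk (or enlarged) by the factor $\lexpans{r}(f)$.

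The first step is to observe the effect of this rescaling on pairwise expansions: for every $u\neq v \in X$,
\[ \expans_{f'}(u,v) = \frac{d_Y(f'(u),f'(v))}{d_X(u,v)} = \frac{\expans_f(u,v)}{\lexpans{r}(f)}. \]
This identity is the only real content of the argument.

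The second step is to plug this into the definitions. By Definition of $\sigma_{q,r}$ with $q=1$,
\[ \sigma_{1,r}(f) = \frac{1}{\binom{|X|}{2}} \sum_{u\neq v \in X} \left| \frac{\expans_f(u,v)}{\lexpans{r}(f)} - 1 \right|, \]
and by the definition of $\energy_1$ applied to $f'$,
\[ \energy_1(f') = \frac{1}{\binom{|X|}{2}} \sum_{u \neq v \in X} \bigl|\expans_{f'}(u,v) - 1\bigr|. \]
Combining these with the expansion identity above gives $\sigma_{1,r}(f) = \energy_1(f')$, which is the claim.

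There is essentially no obstacle here; the only thing to watch is that scaling is legitimate in the target space $Y$. Since the paper works with $Y = \ell_2^k$ (or more generally a normed space), simply multiplying each image vector by $1/\lexpans{r}(f) > 0$ produces a valid map into the same space, so no further justification is needed.
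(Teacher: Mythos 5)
Your proof is correct and is exactly the intended argument: the paper states this claim without proof (deferring to \cite{BFN19}), and the standard justification there is precisely your global rescaling $f' = f/\lexpans{r}(f)$, which makes $\expans_{f'}(u,v) = \expans_f(u,v)/\lexpans{r}(f)$ and turns the $\sigma_{1,r}$ sum into the $\energy_1$ sum term by term. Your remark about the target being a normed space (so that scaling stays inside $Y$, here $\ell_2^k$) is the right caveat, and injectivity of $f'$ is preserved since $\lexpans{r}(f)>0$.
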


\begin{claim}
\label{claim:stress_star}
For an embedding $f:X \to Y$ there is an embedding $f':X \to Y$ such that ${\rm Stress}_1(f')\leq 4\cdot{\rm Stress^*}_1(f)$.
\end{claim}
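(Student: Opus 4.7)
The plan is to take $f'$ to be either $f$ itself or a trivial constant embedding, depending on the size of $s := \mathrm{Stress}^*_1(f)$. The intuition is that $\mathrm{Stress}_1$ and $\mathrm{Stress}^*_1$ only differ in their denominator, so if $f$ already makes the aggregates $\sum_{u\neq v}\hat d_{uv}$ and $\sum_{u\neq v}d_{uv}$ comparable, then the two measures are automatically within a constant factor on $f$, and if not, collapsing the image to a point gives $\mathrm{Stress}_1 = 1$ which is cheap when $s$ is large.

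First I would handle the regime $s \leq 3/4$ by taking $f' = f$. Writing $A = \sum_{u\neq v}\hat d_{uv}$ and $B = \sum_{u\neq v}d_{uv}$, the triangle inequality on the sums gives
\[
|A - B| \;\leq\; \sum_{u\neq v}\bigl|\hat d_{uv} - d_{uv}\bigr| \;=\; s\,A,
\]
and therefore $(1-s)A \leq B$, i.e. $A/B \leq 1/(1-s)$. Multiplying and dividing by $A$ in the definition of $\mathrm{Stress}_1$ yields
\[
\mathrm{Stress}_1(f) \;=\; \frac{\sum_{u\neq v}|\hat d_{uv}-d_{uv}|}{A}\cdot\frac{A}{B} \;\leq\; \frac{s}{1-s}.
\]
For $s \leq 3/4$ a direct check shows $s/(1-s) \leq 4s$, which gives the desired bound.

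For the remaining regime $s > 3/4$, I would take $f'$ to be the constant embedding sending every point to a single point in $Y$, so $\hat d'_{uv} = 0$ for all $u\neq v$. Then $|\hat d'_{uv}-d_{uv}| = d_{uv}$ and hence $\mathrm{Stress}_1(f') = 1$. Since $s > 3/4 > 1/4$, we have $4s > 1 = \mathrm{Stress}_1(f')$, which completes the bound. Combining the two regimes proves the claim.

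I don't anticipate a real obstacle here; the only subtle point is recognizing that one must allow the construction of $f'$ to depend on $s$, since no single universal transformation of $f$ (e.g.\ a fixed scaling) seems to yield Stress $\leq O(1)\cdot$ Stress$^*$ uniformly when $\mathrm{Stress}^*_1(f)$ approaches or exceeds $1$. The case split circumvents this by using the constant embedding as a safety net whenever $s$ is large enough that the trivial Stress value of $1$ is already within the required factor.
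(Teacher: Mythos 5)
Your argument is correct, and in fact the paper gives no proof of this claim to compare against: it is quoted as an observation imported from \cite{BFN19}. Your two-regime argument is a clean self-contained justification. The first regime is exactly the natural calculation: with $A=\sum\hat d_{uv}$, $B=\sum d_{uv}$ and $s=\mathrm{Stress}^*_1(f)$, the numerators of the two measures coincide and $B\geq(1-s)A$ by the triangle inequality, so $\mathrm{Stress}_1(f)\leq s/(1-s)\leq 4s$ whenever $s\leq 3/4$; and the case split is genuinely needed, since $A/B$ is unbounded when $s$ is close to $1$. The one point to tighten is the second regime: the paper defines an embedding to be an \emph{injective} map, so the constant map is not literally admissible, and in an arbitrary target $Y$ there may be no injective map with uniformly tiny image (e.g.\ a uniform metric on $Y$). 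This is cosmetic in the setting where the claim is actually used ($Y=\ell_2^k$): replace the constant map by an injective map into a ball of radius $\delta\leq\min_{u\neq v}d_{uv}/2$, so that $|\hat d'_{uv}-d_{uv}|\leq d_{uv}$ for every pair and hence $\mathrm{Stress}_1(f')\leq 1<4s$ for $s>3/4$. With that adjustment the proof is complete, and your closing remark is apt: no fixed rescaling of $f$ can work uniformly, so some case analysis (or an explicit fallback embedding) is unavoidable.
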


Together with Claim \ref{BFN-relations}, these imply that
\begin{corollary}\label{cor:add_lb}

In order to show  a lower bound for $\ell_1$-distortion, ${\rem_1}$ and $\sigma_{1,r}$ it is enough to lower bound ${\energy}_1$. In order to show a lower bound for ${\rm Stress^*}_1$ it is enough to lower bound ${\rm Stress}_1$.

\end{corollary}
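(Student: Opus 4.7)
The plan is to chain the three claims immediately preceding the corollary, converting each into a reduction from the target measure down to either $\energy_1$ or ${\rm Stress}_1$. Because those claims are already in hand, the proof is essentially clerical: for each measure I need only verify that a uniform lower bound of $\alpha$ on the reference measure, over every embedding $f : X \to Y$ of the hard family, propagates (possibly rescaled) to a uniform lower bound on the target measure.

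For $\ldist{1}$ and $\rem_1$ I would apply Claim \ref{BFN-relations} specialized to $q=1$, which supplies the chain $\ldist{1}(f) - 1 \geq \rem_1(f) \geq \energy_1(f)$. Hence, assuming that every embedding $f$ of the hard family of Section \ref{sec:lower_bound} satisfies $\energy_1(f) \geq \alpha$, the same family witnesses $\rem_1(f) \geq \alpha$ and $\ldist{1}(f) \geq 1 + \alpha$.

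For $\sigma_{1,r}$ I would invoke the first claim of this appendix: for every $f:X \to Y$ and every $r \geq 1$ there exists a companion embedding $f':X \to Y$ with $\sigma_{1,r}(f) = \energy_1(f')$. Hence a uniform lower bound $\energy_1(g) \geq \alpha$ over all embeddings $g:X\to Y$ forces $\sigma_{1,r}(f) = \energy_1(f') \geq \alpha$ for every $f$. Analogously, for ${\rm Stress^*}_1$ I would apply Claim \ref{claim:stress_star}, which produces an $f'$ with ${\rm Stress}_1(f') \leq 4\cdot{\rm Stress^*}_1(f)$, so a uniform lower bound ${\rm Stress}_1(g) \geq \alpha$ transfers to ${\rm Stress^*}_1(f) \geq \alpha/4$.

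There is no genuine obstacle here; the only subtle point worth checking is the quantifier structure. The auxiliary claims transform an embedding of $X$ into another embedding of the \emph{same} $X$ into the \emph{same} target space $Y$, so a bound established uniformly over one family of embeddings of $X$ carries over to the other measure without having to alter the source metric. Once this is noted, the corollary follows by assembling the four one-line reductions above.
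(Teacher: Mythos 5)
Your proposal is correct and matches the paper's (implicit) argument exactly: the corollary is obtained by chaining Claim~\ref{BFN-relations} at $q=1$ with the two appendix claims, and your handling of the quantifiers and the loss factors ($\alpha$ for $\rem_1$ and $\sigma_{1,r}$, $1+\alpha$ for $\ldist{1}$, $\alpha/4$ for ${\rm Stress}^*_1$) is precisely what is needed.
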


We are ready now to prove Theorem \ref{thm:additive}. We restate it here for convenience: 


\begin{theorem}
Given any integer $n$ and $\Omega(\frac{1}{\sqrt{n}}) < \epsilon < 1$, there exists a $\Theta(n)$-point subset of Euclidean space such that any embedding of it into $\ell_2^k$ with any of $\energy_1$, $Stress_1$, $Stress^{*}_1$, $REM_1$ or $\sigma$-distortion bounded above by $\epsilon$ requires $k=\Omega(1/\epsilon^2)$. 
\end{theorem}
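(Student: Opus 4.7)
The plan is to reduce the five-measure statement to two base cases, $\energy_1$ and ${\rm Stress}_1$, and then adapt the Section~\ref{sec:lower_bound} argument from the $\energy_1$ hypothesis to the ${\rm Stress}_1$ hypothesis.

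\textbf{Reduction to two cases.} By Corollary~\ref{cor:add_lb}, any lower bound $k = \Omega(1/\epsilon^2)$ under the hypothesis $\energy_1(f) \leq \epsilon$ immediately transfers to $\rem_1(f) \leq \epsilon$ and to $\sigma_{1,r}(f) \leq \epsilon$, and any such lower bound under ${\rm Stress}_1(f) \leq \epsilon$ transfers to ${\rm Stress}^*_1(f) \leq \epsilon$ (the factor of $4$ in Claim~\ref{claim:stress_star} is absorbed into the constant in front of $\epsilon$). The $\energy_1$ base case is exactly Theorem~\ref{thm:average}, established in Section~\ref{sec:lower_bound}. It remains to establish the ${\rm Stress}_1$ base case, which is what the rest of the plan addresses.

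\textbf{Adapting the proof to ${\rm Stress}_1$.} We work with the same family $\mathcal{P}$ of instances ${\rm I} = O \cup E \cup {\rm Y}$ as in Section~\ref{sec:lower_bound}, so the grid rounding (Lemma~\ref{lemma:grid}), the encoding/decoding procedure of Sections~\ref{sec:encoding} and~\ref{sec:decoding}, and the final counting deduction are all reused verbatim. The only place where the $\energy_1$ hypothesis entered the argument was Lemma~\ref{lemma:average}, which produced a near-zero center $\hat{o}$, a large subset $\hat{{\rm I}} \subseteq {\rm I}$ mapped into a constant-radius ball, and an $O(\epsilon)$ bound on the average absolute error of inner products over $\hat{{\rm I}}$. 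Thus it suffices to re-derive the conclusion of Lemma~\ref{lemma:average} from the assumption ${\rm Stress}_1(f) \leq \epsilon$.

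The key observation is that in our construction every pair $(u,v)$ with at least one endpoint outside $O$ has $d_{uv} \geq 1 - \epsilon/100 \geq 1/2$, while $\sum_{u\neq v} d_{uv} \leq \sqrt{2}\binom{|{\rm I}|}{2}$ since $\mathrm{diam}({\rm I}) = \sqrt{2}$. The hypothesis ${\rm Stress}_1(f) \leq \epsilon$ therefore yields $\sum_{u\neq v}\abs{\hat d_{uv} - d_{uv}} \leq \sqrt{2}\,\epsilon\,\binom{|{\rm I}|}{2}$, and for any pair with at most one endpoint in $O$ this gives $\abs{{\expans}_f(u,v) - 1} = \abs{\hat d_{uv} - d_{uv}}/d_{uv} \leq 2\abs{\hat d_{uv} - d_{uv}}$. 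Summing this inequality over pairs of the form $(o_j, v)$ with $v \in {\rm I}\setminus O$ and applying an averaging argument identical to~(\ref{eq:o-hat}), we find some $\hat{o} \in O$ with $\frac{1}{|{\rm I}\setminus O|}\sum_{v \in {\rm I}\setminus O}\abs{{\expans}_f(\hat{o},v) - 1} = O(\epsilon)$. We then set $\hat{{\rm I}} := \{\hat{o}\}\cup\{v \in {\rm I}\setminus O : \abs{{\expans}_f(\hat{o},v) - 1} \leq 3\epsilon/\alpha\}$, which by Markov covers a $(1-\alpha)$-fraction of ${\rm I}\setminus O$ and, after translating so that $f(\hat{o}) = 0$, sits in a ball of radius $1 + O(\epsilon/\alpha)$. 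For the final inner-product bound, we use the same telescoping identity as in Lemma~\ref{lemma:average}; the $\abs{{\expans}_f(u,\hat{o}) - 1}$ terms are controlled by the definition of $\hat{{\rm I}}$, and the $\abs{{\expans}_f(u,v) - 1}$ terms over pairs in $\binom{\hat{{\rm I}}}{2}$ are controlled by $2\abs{\hat d_{uv} - d_{uv}}$, whose average is $O(\epsilon)$ by the Stress hypothesis.

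\textbf{Main obstacle.} The only subtlety is that ${\rm Stress}_1$ provides no control on pairs inside $O \times O$, whose distances are as small as $\epsilon/100$ and for which a tiny absolute error can translate into a huge multiplicative error. This is why $\hat{{\rm I}}$ must be built from ${\rm I}\setminus O$ rather than all of ${\rm I}$. It does not affect the counting argument, because the encoding in Section~\ref{sec:encoding} only exploits the inner products between points in $E$ and $Y$, both of which survive in $\hat{{\rm I}}$; the $O$ points are common to every ${\rm I}\in\mathcal{P}$ and contribute nothing to the encoding cost. Once the Stress analog of Lemma~\ref{lemma:average} is in hand, Corollaries~\ref{corr:combine} and~\ref{corr:main}, the encoding/decoding of Sections~\ref{sec:encoding}--\ref{sec:decoding}, and the deduction of $k = \Omega(1/\epsilon^2)$ apply verbatim. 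Extension to metric spaces of arbitrary size $n$ then follows from Lemma~\ref{lem:metric_comp} exactly as in the $\energy_1$ case, completing the proof.
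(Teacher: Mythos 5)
Your overall architecture matches the paper's: reduce the five measures to the two base cases $\energy_1$ and ${\rm Stress}_1$ via Corollary~\ref{cor:add_lb}, reuse Theorem~\ref{thm:average} for the first, and rerun the Section~\ref{sec:lower_bound} machinery for the second with only Lemma~\ref{lemma:average} re-derived. However, the step where you convert the Stress hypothesis into control of multiplicative errors rests on a false premise. You claim that every pair with at least one endpoint outside $O$ has $d_{uv}\geq 1-\epsilon/100\geq 1/2$. This holds for pairs involving $O$ or $E$, but fails badly inside $Y$: for two index sets with $\abs{S\triangle S'}=2$ one has $\norm{y_S-y_{S'}}_2=\sqrt{2/l}=\Theta(\gamma\epsilon)$, so $\abs{{\expans}_f(y_S,y_{S'})-1}=\abs{\hat d_{uv}-d_{uv}}/d_{uv}$ can exceed $2\abs{\hat d_{uv}-d_{uv}}$ by a factor of order $1/\epsilon$. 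Since $\binom{\hat{{\rm I}}}{2}$ contains all such $Y\times Y$ pairs, your final assertion that "the $\abs{{\expans}_f(u,v)-1}$ terms over pairs in $\binom{\hat{{\rm I}}}{2}$ are controlled by $2\abs{\hat d_{uv}-d_{uv}}$, whose average is $O(\epsilon)$ by the Stress hypothesis" is not justified as written.

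The gap is repairable, and the repair is exactly what the paper does: never pass to multiplicative error for generic pairs. The telescoping identity only ever needs the \emph{additive} quantity $\abs[\big]{\norm{f(u)-f(v)}_2^2-\norm{u-v}_2^2}=\abs{\hat d_{uv}-d_{uv}}\,(\hat d_{uv}+d_{uv})\leq O(1)\cdot\norm{u-v}_2\abs{{\expans}_f(u,v)-1}$ once $f(\hat{{\rm I}})$ lies in a constant-radius ball, and the average of $\norm{u-v}_2\abs{{\expans}_f(u,v)-1}$ over all pairs is directly $O(\epsilon)$ because ${\rm Stress}_1(f)\leq\epsilon$ and the average distance in ${\rm I}$ is at most $\sqrt{2}$ (the paper formalizes this via the normalized quantity $\overline{Stress}_1$ and carries the weight $\norm{u-v}_2$ through every estimate of Lemma~\ref{lemma:average}, including the definition of $\hat{o}$ and of $\hat{{\rm I}}$). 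Two smaller points: your restriction of $\hat{{\rm I}}$ to $({\rm I}\setminus O)\cup\{\hat{o}\}$ is harmless but unnecessary once the weighted formulation is adopted; and the extension to arbitrary $n$ cannot cite Lemma~\ref{lem:metric_comp} "verbatim," since that lemma is stated for $\energy_q$ --- one needs the Stress variant, which requires additionally checking that the composed space retains $O(1)$ average distance, as the paper notes.
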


\begin{proof}
We already proved the theorem for ${\energy}_1$, therefore by Corollary \ref{cor:add_lb} it remains to prove it for ${\rm Stress}_1$. First, not that for any embedding $f:X \to Y$ it holds that

\[Stress_1(f)=\frac{ \sum_{u \neq v \in X} |\hat d_{uv}-d_{uv}| }{S[X]}=
\frac{1}{S[X]/{\binom{\abs{X}}{2}}} \frac{\sum_{u \neq v \in X}d_{uv}\abs{{\expans}_f(u,v)-1}}{\binom{\abs{X}}{2}},\]
where $S[X]=\sum_{u \neq v \in X}d_{uv}$. We define 

\[\overline{Stress}_1(f):=\frac{\sum_{u \neq v \in X}d_{uv}\abs{{\expans}_f(u,v)-1}}{\binom{\abs{X}}{2}}.\]

Observe that if $X$ is such that $S[X]/\binom{{\abs{X}}{2}} \leq c$ for a constant $c>0$ then $Stress_1(f)=\Omega(\overline{Stress}_1(f))$. Therefore, it is enough to show that there is a metric space (of an arbitrary size $\Theta(n)$) with at most constant average distance on which the lower bound is obtained for 
$\overline{Stress}_1$. We note that the composition Lemma \ref{lem:metric_comp} also works for constructing arbitrary size metrics for $\overline{Stress}$ notion. 

Therefore, we show that there is a metric space ${\rm I}$ of size $\Theta(1/\epsilon^2)$ such that if any embedding $f:I \to \ell_2^k$ has $\overline{Stress}_1(f)\leq \epsilon$ then $k=\Omega(1/\epsilon^2)$. In addition, ${\rm I}$ is such that $S[I]/{\binom{\abs{I}}{2}} \leq \sqrt{2}$. Since a metric space obtained by the composition in Claim \ref{lem:metric_comp} has the same diameter as the base space (${\rm I}$ in our case), this will complete the proof of the lower bound on an arbitrary size example, and its embedding in Euclidean space would increase this bound further by at most an extra $1+\delta$ factor, for arbitrary $\delta>0$.

We argue that a slight variation on the proof in Section \ref{sec:lower_bound} for ${\energy}_1$ works for ${\overline{Stress}_1}$ as well. 
The initial assumption (\ref{eq:energy-bound}) should be changed accordingly to:
\begin{equation}\label{eq:O-stress-bound} 
{\overline{Stress}}_1(f) = \frac{1}{\abs[\big]{\binom{I}{2}}} \sum_{(u,v) \in \binom{I}{2}} \norm{u-v}_2{\abs[\big]{{\expans}_{f}(u,v)-1}} \leq \epsilon. \end{equation}
So then condition (\ref{eq:o-hat}) defining $\hat{o} \in O$ becomes the point minimizing
\begin{equation}\label{eq:O-stress-o-hat} 
\frac{1}{|I|-1} \sum_{v \in I, v\neq \hat{o}} \norm{v-\hat{o}}_2 {\abs[\big]{{\expans}_{f}(\hat{o},v)-1}} \leq 3\epsilon.
\end{equation}
As before, assume without loss of generality that $f(\hat{o})=0$. Let $\hat{I}$ be the set of all $v \in I$ such that $ \norm{v-\hat{o}}_2 {\abs[\big]{{\expans}_{f}(\hat{o},v)-1}} \leq \frac{3\epsilon}{\alpha}$. By Markov's inequality, $\abs{\hat{I}} \geq (1-\alpha)\abs[\big]{I}$. 
We have that for all $v\in {\hat{\rm I}}$, $\norm{v-\hat{o}}_2 \abs{{\expans}_f(v,\hat{o})-1}=\abs{\norm{f(v)}_2 -\norm{v-\hat{o}}_2}\leq \frac{3\epsilon}{\alpha}$, and using $\norm{v- \hat{o}}_2 \leq \norm{v}_2 + \norm{\hat{o}}_2 \leq 1+\epsilon/100$, so that
$\norm{f(v)}_2 \leq 1+\epsilon/100+\frac{3\epsilon}{\alpha} \leq 1+ \frac{3.001\epsilon}{\alpha}$, implying that $f(v) \in B_2\left( 1+\frac{3.01\epsilon}{\alpha}\right)$.

The main change is in the estimations made in Lemma~\ref{lemma:average} using that for all $(u,v)\in \binom{\hat{I}}{2}$:
\begin{eqnarray*}
\abs[\big]{\inp{f(u)}{f(v)}-\inp{u}{v}} & \leq & \frac{1}{2} \left[ \abs[\big]{\norm{f(u)}_2^2-\norm{u}_2^2} + \abs[\big]{\norm{f(v)}_2^2-\norm{v}_2^2}\right] \\
&+& \frac{1}{2}\left[\abs[\big]{\norm{f(u)-f(v)}_2^2-\norm{u-v}_2^2} \right].
\end{eqnarray*}
Using the bounds we have shown for each term we get:
\begin{eqnarray*}
\abs[\big]{\norm{f(u)}_2^2-\norm{u}_2^2} & \leq & \norm{u-\hat{o}}_2\abs{{\expans}_f(u,\hat{o})-1}(\norm{f(u)}_2+\norm{u-\hat{o}}_2) + \norm{\hat{o}}_2(\norm{u-\hat{o}}_2+\norm{u}_2) \\
& \leq & \left(2+\frac{1}{9\alpha}\right)\norm{u-\hat{o}}_2\abs{{\expans}_f(u,\hat{o})-1} + \frac{\epsilon}{40}.
\end{eqnarray*} 
Similarly,
\begin{eqnarray*}
 \abs[\big]{\norm{f(u)-f(v)}_2^2-\norm{u-v}_2^2} &\leq &\norm{u-v}_2 \abs{{\expans}_f(u,v)-1}(\norm{f(u)}_2+\norm{f(v)}_2 + \norm{u-v}_2) \\ 
& \leq & \left(2\left(1+\frac{3.002\epsilon}{\alpha}\right) + \sqrt{2}\right) \norm{u-v}_2 \abs{{\expans}_f(u,v)-1}\\ 
&\leq& \left(5+\frac{1}{4\alpha}\right) \norm{u-v}_2 \abs{{\expans}_f(u,v)-1} ,
\end{eqnarray*}

The rest of the proof carries on exactly the same as in Lemma~\ref{lemma:average} where the terms of the form \linebreak $\abs{{\expans}_f(u,v)-1}$ are replaced by $\norm{u-v}_2\abs{{\expans}_f(u,v)-1}$.
Recall that each ${\rm I} \in \mathcal{P}$ has ${\rm diam(I)} \leq \sqrt{2}$, so that this bound applied to $S[I]/{\binom{\abs{I}}{2}}$ as well, which completes the proof of the theorem. 

\end{proof}

\end{document}